\documentclass[journal]{IEEEtran}

\usepackage[utf8]{inputenc}
\usepackage[T1]{fontenc}
\usepackage{cite}
\usepackage[caption=false,font=footnotesize]{subfig}
\usepackage{amsmath}
\usepackage{amssymb}
\usepackage{amsfonts}
\usepackage{placeins} 
\usepackage{url}
\usepackage{algorithm}
\usepackage[noend]{algpseudocode}
\usepackage{array}
\usepackage{booktabs}
\usepackage{breqn}
\usepackage{multirow}
\usepackage[table]{xcolor}
\usepackage{dblfloatfix}
\usepackage{graphicx}
\usepackage{float}
\usepackage{mathtools}
\usepackage{amsthm}
\newtheorem{proposition}{Proposition}
\newtheorem{remark}{Remark}
\allowdisplaybreaks

\begin{document}
\bstctlcite{IEEEexample:BSTcontrol}
\title{Dual-Orthogonality Waveforms \\ for Integrated Communication and Imaging \\ in Dynamic Multipath Channels}

\author{
Edoardo Talignani$^*$,~\IEEEmembership{Student Member,~IEEE}, Francesco Linsalata$^*$,~\IEEEmembership{Member,~IEEE}, 
\\ Musa Furkan Keskin,~\IEEEmembership{Senior Member,~IEEE}, 
Davide Scazzoli,~\IEEEmembership{Member,~IEEE}, Alireza Pourafzal,~\IEEEmembership{Member,~IEEE}, \\ Mohammad Mahdi Mojahedian,~\IEEEmembership{Member,~IEEE}, and Henk Wymeersch,~\IEEEmembership{Fellow,~IEEE}%
\thanks{E. Talignani, F. Linsalata, D. Scazzoli are with the Dipartimento di Elettronica, Informazione e Bioingegneria, Politecnico di Milano, Milano, Italy (e-mail: \{name.surname\}@polimi.it).}
\thanks{M. F. Keskin, A. Pourafzal, M. M. Mojahedian, and H. Wymeersch are with the Department of Electrical Engineering, Chalmers University of Technology, Gothenburg, Sweden (e-mail: {furkan@chalmers.se}, {alireza.pourafzal@chalmers.se}, {m.mojahedian@gmail.com}, {henkw@chalmers.se})}
\thanks{$^*$E. Talignani and F. Linsalata equally contributed to this research as co–first authors. The authors would like to thank Marco Manzoni and Stefano Tebaldini for their valuable discussions and insightful comments on the radar-related aspects of this work.}}

\maketitle

\begin{abstract}
Dual-Orthogonality waveforms are multi-antenna signaling schemes that enforce mutual orthogonality across transmit channels and over a prescribed set of delay shifts. By relaxing strict time orthogonality to the physically admissible region induced by propagation, these waveforms preserve full-band operation per transmit antenna while enabling the injection of communication data within the relaxed orthogonality structure and maintaining separability among spatial streams.
This property makes Dual-Orthogonality particularly attractive for Integrated Sensing and Communications (ISAC), where reliable data transmission, high-resolution sensing, and imaging must coexist under realistic, time-varying propagation conditions.
In dynamic communication multipath environments, combined delay--Doppler dispersion across multiple propagation paths perturbs the transmit subspaces and partially breaks the relaxed orthogonality conditions.
This paper analyzes the impact of multipath channel dynamics on Dual-Orthogonality waveforms and develops a multipath-structured decoding framework that exploits the waveform-induced subspace structure to enable low-complexity linear equalization.
{\color{black}
The receiver further incorporates structured multipath parameter estimation and effective-subspace reconstruction to support interference-aware decoding under realistic propagation conditions.
}
Numerical results show communication performance comparable to Orthogonal Frequency Division Multiplexing (OFDM)-based ISAC {\color{black}and MIMO-Orthogonal Time Frequency Space (OTFS) baselines} in time-varying multipath, while improving sensing/imaging via full-band per-transmit operation. The method achieves approximately 30~cm range resolution, $>15$~dB suppression of multipath imaging artifacts with coherent Synthetic Aperture Radar (SAR) processing, and a more favorable sensing--communication trade-off.
{\color{black}
Over-the-air experiments at 60~GHz validate complementary aspects of the framework: controlled measurements confirm multi-stream communication, the designed zero-correlation region, and accurate radar ranging, while a second campaign in a highly reflective indoor environment directly demonstrates multipath-aware stream equalization under strong unsuppressed reflections.
}
\end{abstract}

\begin{IEEEkeywords}
ISAC, multipath propagation, linear equalization, channel dynamics, radar imaging.
\end{IEEEkeywords}

\section{Introduction}
Integrated sensing and communications (ISAC) extends the long-standing evolution of cellular networks beyond pure data transport. Early cellular generations supported secondary functions primarily in the form of user localization, typically with meter-level accuracy~\cite{delperal2018}. With the introduction of wider bandwidths and dedicated reference signals, recent standards have pushed positioning accuracy to the decimeter level~\cite{italiano2024tutorial}. ISAC generalizes this paradigm by enabling environmental sensing and imaging in addition to communication, using the same transmitted waveforms and shared physical resources~\cite{wei2023,saad2020}. In this context, sensing encompasses functionalities like target detection, ranging, tracking, and high-resolution imaging. This joint operation allows reliable multi-antenna data transmission while simultaneously supporting high-resolution sensing and imaging, which are expected to be native capabilities of future sixth-generation (6G) wireless systems.
\begin{figure}[!t]
    \centering
    \includegraphics[width=0.8\linewidth]{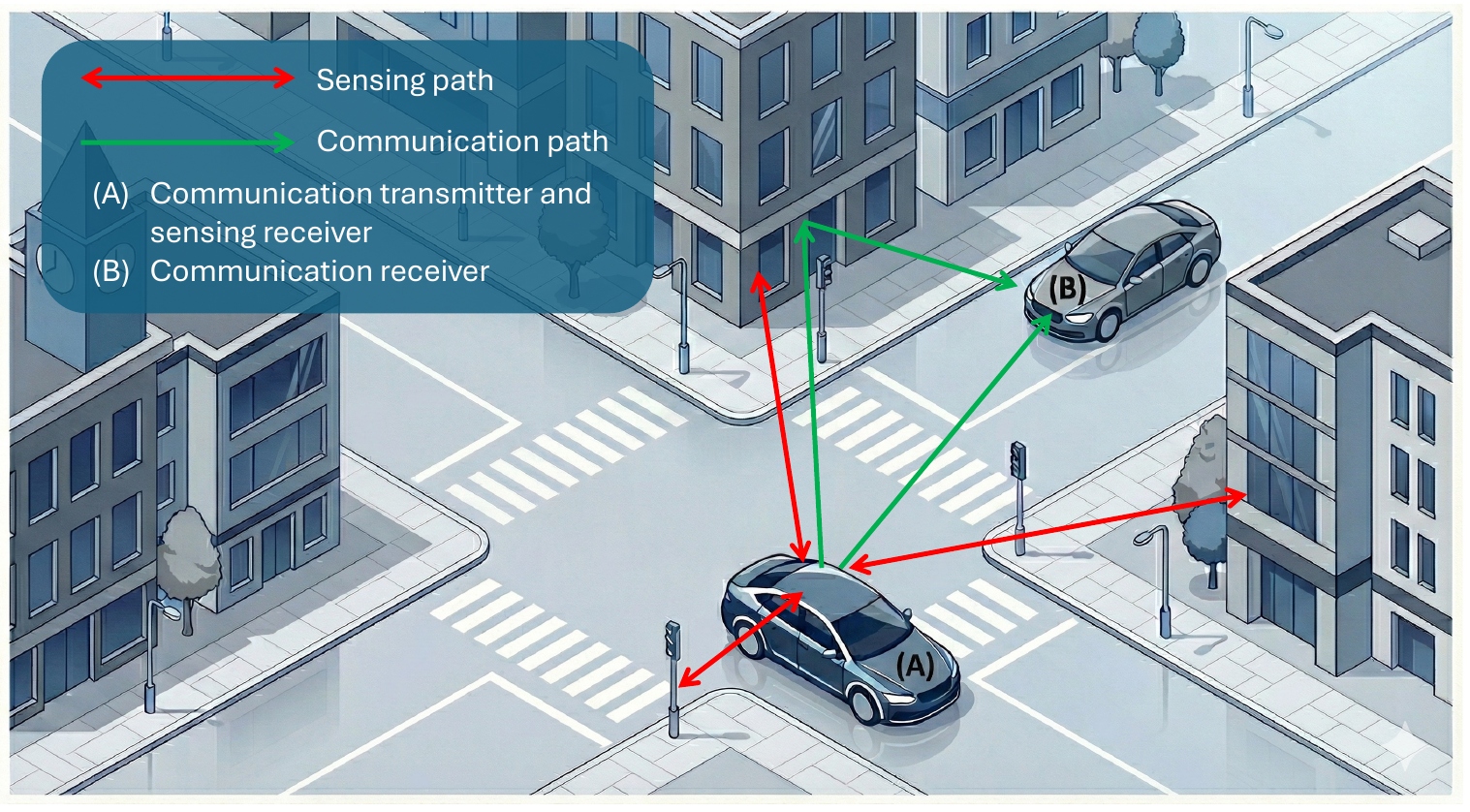}
    \caption{Representative ISAC scenario. Node~A simultaneously serves a remote communication receiver~B and senses the surrounding environment through reflected echoes. Multipath-induced interference degrades communication at~B, while providing exploitable information for sensing at~A.}
    \label{fig:scene}
\end{figure}
Fig.~\ref{fig:scene} highlights the fundamentally asymmetric role of multipath propagation in ISAC systems. While the sensing function inherently exploits reflections to reconstruct environmental features, the communication receiver experiences multipath as a source of interference. This dual role of multipath represents a central challenge in the design of practical ISAC waveforms and receivers, particularly under realistic delay and Doppler dynamics.

From a waveform design perspective, imaging extended and distributed targets ideally requires signals that remain orthogonal under arbitrary relative delay and Doppler shifts. In practice, however, realistic field-of-view and range constraints limit the set of physically observable delay values. This observation motivates a relaxation of strict time orthogonality to only those shifts that can be induced by the scene, allowing the remaining degrees of freedom to be exploited for data transmission. This principle underlies the concept of \emph{Dual-Orthogonality} waveforms, which are designed to be orthogonal across transmit antennas and over a prescribed set of admissible delay shifts \cite{ICaSSPManzoni}.

Multipath propagation plays a fundamentally asymmetric role in ISAC systems: it constitutes a primary source of information for sensing and imaging, while simultaneously degrading communication performance through inter-stream and inter-symbol interference.
Dual-Orthogonality waveforms were originally introduced in~\cite{ICaSSPManzoni},
demonstrating that relaxed orthogonality enables the joint support of
communication and high-resolution sensing. However, that work focused on
COSMIC waveform construction under idealized single-path communication
assumptions and did not explicitly address the impact of multipath channel
dynamics.
{\color{black}
That work did not characterize how multiple delay--Doppler paths perturb
the waveform-induced transmit subspaces, nor did it provide the channel
estimation, interference equalization, or coherent multipath-imaging
processing required for operation in dynamic propagation environments.
}
This paper closes this gap by explicitly analyzing Dual-Orthogonality
waveforms for communication in realistic multipath channels and by jointly
addressing waveform structure, receiver processing, channel estimation,
and environmental reconstruction.
{\color{black} These developments enable reliable
communication and high-resolution imaging under propagation conditions
that were outside the scope of the original COSMIC formulation.
}
In contrast to conventional ISAC receivers based on matched filtering, we show that, in the case of Dual-Orthogonality waveforms, multipath fundamentally alters the effective transmit subspaces observed at the receiver, rendering such approaches suboptimal. To address this issue, we develop a multipath-aware receiver framework based on structured channel estimation and low-complexity linear equalization, enabling reliable communication while preserving full-band sensing and imaging capabilities. {\color{black} Throughout the paper, Orthogonal Frequency Division Multiplexing (OFDM) and Orthogonal Time Frequency Space (OTFS) are adopted as reference baselines to benchmark the proposed approach under identical channel and bandwidth conditions.}

The main contributions of this work are as follows:
\begin{itemize}

{\color{black}
\item Building upon the original COSMIC waveform-construction principle
\cite{ICaSSPManzoni}, we derive a complete Dual-Orthogonality signal and
receiver model for dynamic multipath ISAC, explicitly characterizing the
subspace coupling introduced by multiple delay--Doppler propagation paths.
}

\item We analytically characterize the impact of multipath and Doppler
dynamics on the induced transmit subspaces and develop a structured
multipath-aware linear decoding strategy, supported by Cramér--Rao Bound
analysis to quantify estimation limits.
{\color{black}
The receiver formulation further includes practical estimation of
the number of dominant resolvable paths and explicitly characterizes the
linear combiner as a constrained Zero-Forcing subspace receiver, together
with its feasibility conditions and limitations.
}

\item Numerical simulations show communication performance comparable to
{\color{black}representative OFDM-based ISAC and MIMO-OTFS baselines}
while preserving full-band operation per transmit channel, enabling
decimeter-level range resolution (30~cm) and over $15\,\mathrm{dB}$
multipath artifact suppression.
{\color{black}
The comparison jointly considers communication performance, imaging
quality, and effective goodput under matched dynamic multipath
conditions.
}
Joint ISAC results confirm operation in a favorable
sensing--communication trade-off region with competitive effective goodput.

\item We experimentally validate the proposed waveforms at
$60\,\mathrm{GHz}$, confirming reliable multi-stream communication
(Error Vector Magnitude (EVM) $\approx -13\,\mathrm{dB}$), a measurable
zero-correlation region, and centimeter-level radar ranging accuracy.
{\color{black}
Moreover, an experimental campaign conducted in a reflective indoor environment directly evaluates the multipath-aware communication receiver under strong unsuppressed reflections, demonstrating substantial
EVM recovery through the proposed subspace equalization without relying on receive-side spatial separation.
}

\end{itemize}

The remainder of this paper is organized as follows. Section~II reviews related work on ISAC waveform design. Section~III introduces the system and signal model. Section~IV presents the proposed multipath-aware communication receiver and channel estimation framework. Section~V addresses imaging. Section~VI reports numerical simulation results, while Section~VII provides experimental validation. Section~VIII concludes the paper.

\subsubsection*{Paper Notation}
Scalars are denoted by italic letters, vectors by bold lowercase letters, and
matrices by bold uppercase letters. Continuous-time signals are denoted by
$(\cdot)$, while discrete-time samples are denoted by $[\cdot]$.
Superscripts $n$ index transmit antennas or data streams, whereas subscripts
index propagation paths or auxiliary quantities.
The operators $(\cdot)^{\top}$, $(\cdot)^{\mathsf{H}}$, and $\mathrm{vec}(\cdot)$ denote
transpose, Hermitian transpose, and vectorization, respectively, and
$\otimes$ denotes the Kronecker product. The operator $\mathrm{diag}(x_k)_{k=0}^{K-1}$ denotes a $K \times K$ diagonal matrix whose $k$-th main diagonal element is $x_k$.
$\mathbb{C}$ denotes the complex field.  $\mathbf{I}_K$ denotes the $K\times K$ identity
matrix, while $\mathbf{0}$ the all-zero vector.

\section{Related Works}

\begin{table}[!t]
\centering
\caption{\color{black}Summary of Representative ISAC Waveform Categories}
\label{tab:ISAC_waveforM_Summary}
\resizebox{0.5\textwidth}{!}{%
\begin{tabular}{l p{4cm} p{4cm} l}
\toprule
\textbf{Waveform} & \textbf{Pros} & \textbf{Cons} & \textbf{Ref.} \\ 
\midrule

\multicolumn{4}{l}{\textit{Communication-centric}} \\ 
CP-OFDM & Standardized; spectrally efficient & High PAPR; Doppler-sensitive & \cite{3gpp38211,liu2022fundlimits,yuan2023multipleCP} \\ 

CP-SC & Low PAPR; PA-efficient & Equalization complexity & \cite{zhou2022,tanahashi2010scpower,zeng2020cpsc} \\ 

CAZAC & Ideal autocorrelation & Doppler sensitive; limited modulation & \cite{zhang2024cazac} \\ 

OTFS & Robust in delay--Doppler & High complexity & \cite{shtaiwi2024otfssurvey,gaudio2020otfsisac,raviteja2019otfsradar,keskin2024mimootfs} \\ 

AFDM & Full diversity; multipath resolution & Chirp optimization needed & \cite{bemani2024integrated,liu2024afdmPIM} \\ 

Filtered (UFMC/FBMC) & Low OOB leakage & Intrinsic interference; complex pilots & \cite{zhou2022} \\ 

\midrule
\multicolumn{4}{l}{\textit{Sensing-centric}} \\ 

LFM/FMCW & High range resolution; Doppler robust & Low data rate; sweep trade-off & \cite{liu2022jsac_1,zhou2022} \\ 

Index Modulation & Radar-compatible data embedding & Detection complexity; index errors & \cite{Elbir2024IMISAC_SPM,Xu2020GSM,DoganTusha2021MDIM} \\ 

Spatial/PPM & Constant envelope; hardware efficient & Low throughput; delay ambiguity & \cite{hassanien2016overview,liu2025ppm,zhou2022} \\ 

\midrule
\multicolumn{4}{l}{\textit{Joint-design}} \\ 

Cooperative ISAC & Rate--accuracy Pareto optimization & Coordination overhead & \cite{GaoLianYu_JSAC_2023_CoISAC_RSMA_Pareto} \\ 

FD Joint Beamforming & Joint SI mitigation and beamforming & Strong coupling; CSI dependent & \cite{HeEtAl_JSAC_2023_FD_ISAC_Joint_Beamforming} \\ 

Symbol-Level Precoding & Symbol-level rate--sensing trade-off & High per-symbol complexity & \cite{LiuEtAl_JSTSP_2021_SLP_DFRC} \\ 

COSMIC & Joint communication–imaging & CSI dependent & \cite{ICaSSPManzoni} \\ 

\bottomrule
\end{tabular}%
}
\end{table}

Waveform designs for ISAC can be broadly categorized based on how communication and sensing functionalities are combined. 
Multipath propagation plays a central role: it is often treated as an impairment to be mitigated or suppressed, even though it also constitutes a fundamental source of information for sensing and imaging.

\textbf{Communication-oriented Designs} exploit the fact that known communication signals can be correlated at the receiver for sensing using pilots, preambles, or data-aided reconstructions~\cite{liu2022jsac_1,zhou2022}.
As a result, design priorities include favorable correlation properties, robustness to Doppler and carrier-frequency offsets, and low peak-to-average power ratio (PAPR). OFDM, which underpins the 5G New Radio (NR) physical layer~\cite{3gpp38211}, is the most prominent example due to its Fast Fourier Transform (FFT)-based transceivers, high spectral efficiency, and natural multiple-input multiple-output (MIMO) scalability~\cite{liu2022fundlimits}. Nevertheless, Cyclic Prefix (CP)--OFDM suffers from high PAPR, sinc-shaped sidelobes, Doppler sensitivity, and cyclic-prefix overhead~\cite{yuan2023multipleCP}. To better handle doubly selective channels, delay--Doppler-domain multicarrier schemes such as OTFS and Affine Frequency Division Multiplexing (AFDM) have been proposed. These schemes exploit sparse delay--Doppler representations to resolve multipath components and mitigate Doppler-induced interference, enabling accurate joint delay and Doppler estimation at the cost of increased transceiver complexity and sensitivity to pulse design~\cite{shtaiwi2024otfssurvey,bemani2024integrated}. Extensions such as AFDM-based index modulation further embed information onto additional index dimensions while preserving sensing performance~\cite{liu2024afdmPIM}. Other communication-centric candidates include single-carrier waveforms, constant-amplitude zero-autocorrelation (CAZAC) sequences, and filtered multicarrier schemes (e.g., UFMC, FBMC/OQAM, GFDM), which offer improved power efficiency or spectral containment at the expense of increased receiver complexity or reduced Doppler robustness~\cite{zhou2022}.

\textbf{Sensing-oriented Designs} primarily prioritize the radar waveform, while
communication information is embedded so as not to disrupt sensing
performance.
Typical implementations rely on linear frequency-modulated (LFM) or frequency-modulated continuous-wave (FMCW) chirps, as well as phase- or code-modulated radar signals, preserving constant-envelope transmission, ambiguity-function control, and Doppler tolerance~\cite{liu2022jsac_1,zhou2022}. Information embedding is commonly achieved via index modulation across antennas, subcarriers, pulses, or codes, allowing data transmission while maintaining radar compatibility~\cite{Elbir2024IMISAC_SPM,Xu2020GSM,DoganTusha2021MDIM}. Alternative approaches encode information into spatial sidelobes of the transmit beampattern or into controlled timing shifts, as in pulse-position modulation (PPM), enabling concurrent radar operation and low-rate communication~\cite{hassanien2016overview,liu2025ppm}. While hardware efficient, these schemes face intrinsic limits on achievable data rates and may introduce parameter coupling between communication symbols and sensing observations, potentially biasing range--Doppler estimates~\cite{zhou2022}.

\textbf{Joint Designs} depart from simple waveform reuse and instead adopt
system-level optimization, designing signals or beamformers that explicitly
balance communication and sensing objectives.
Representative examples include cooperative multi-base-station formulations that map the Pareto frontier between communication rate and localization accuracy~\cite{GaoLianYu_JSAC_2023_CoISAC_RSMA_Pareto}, full-duplex ISAC architectures with joint beamforming and power optimization under radar Signal-to-Interference-plus-Noise Ratio (SINR)  constraints~\cite{HeEtAl_JSAC_2023_FD_ISAC_Joint_Beamforming}, and symbol-level precoding for dual-functional radar--communication systems, where constant-modulus waveforms are optimized on a per-symbol basis to meet both sensing and communication requirements~\cite{LiuEtAl_JSTSP_2021_SLP_DFRC}. Although powerful, these methods typically rely on centralized optimization, accurate channel state information (CSI), and high computational complexity, which can limit scalability and real-time applicability.

\textbf{Discussions:} Despite their differences, existing ISAC waveform designs largely rely on idealized propagation assumptions or receiver structures that do not explicitly account for multipath-induced subspace coupling, particularly in the context of joint communication and imaging. This limitation motivates waveform and receiver designs that remain effective under realistic channel dynamics while preserving high-resolution sensing and imaging capabilities. Table~\ref{tab:ISAC_waveforM_Summary} provides a comparative summary of representative ISAC waveform categories, highlighting their main advantages and limitations.

OFDM-based schemes are particularly relevant for ISAC due to their direct compatibility with 5G New Radio (NR) and existing cellular infrastructures. 
In practical
MIMO--OFDM sensing and ISAC systems, multi-antenna transmission is commonly realized
via frequency-division multiple access (FDMA), whereby each transmitter occupies a
disjoint subset of subcarriers. While this preserves inter-antenna orthogonality and
enables joint delay--Doppler--angle estimation using standard OFDM processing~\cite{keskin2021jstsp},
it effectively reduces the per-transmitter bandwidth. When echoes are processed
independently, for instance in time-domain back-projection imaging, this leads to a
range-resolution loss proportional to the number of transmit antennas unless coherent
combination is performed, a known limitation of FDMA MIMO radar~\cite{cohen2020fdma}.

Interleaved OFDM (I--OFDM) mitigates this loss by assigning each transmitter a uniformly
interleaved set of subcarriers spanning the full bandwidth, thereby preserving range
resolution while maintaining frequency-domain orthogonality~\cite{Sturm2013SpectralInterleave}.
However, the resulting periodic spectral structure produces Dirichlet-like
autocorrelation functions with grating sidelobes at predictable delay offsets, which
can degrade imaging performance~\cite{Lin2015IOFDM}. Non-equidistant subcarrier
interleaving (NeqSI) addresses this issue by employing a jointly designed non-uniform
spectral lattice that breaks periodicity and suppresses coherent sidelobe buildup,
while preserving the mainlobe determined by the total bandwidth~\cite{Hakobyan2016NeqSI,Hakobyan2020NeqSI_Optimized,Wu2022OFDM_Tutorial}.


Despite these advantages, both I--OFDM and NeqSI introduce practical challenges.
Irregular spectral allocations increase transceiver complexity and impose stringent
inter-antenna synchronization requirements under timing and carrier-frequency
offsets~\cite{Sturm2013SpectralInterleave,Lin2015IOFDM}. In addition, NeqSI is sensitive
to frequency selectivity, as non-uniform tone and pilot placement complicates channel
estimation and may degrade decoding performance in rapidly time-varying channels~\cite{Hakobyan2020NeqSI_Optimized}.
Both schemes also remain susceptible to Doppler-induced intercarrier interference if
compensation is imperfect~\cite{Lin2015IOFDM}. Existing works primarily treat NeqSI as
a radar-oriented extension of communication-style multicarrier signaling rather than
as part of a unified ISAC design~\cite{Hakobyan2016NeqSI,Hakobyan2020NeqSI_Optimized}.
Nevertheless, because NeqSI yields superior radar performance while preserving the robust communication capabilities of standard OFDM, despite its higher implementation complexity, it is adopted here{\color{black}, alongside OTFS as the state-of-the-art baseline for delay--Doppler resilience,} and used for comparison with the proposed Dual-Orthogonality waveform scheme under identical conditions. 

\section{System and Signal Model}
\label{sec:sys_signal_model}

We consider an ISAC node equipped with $N_T$ transmit and $M_S$ receive antennas. The moving ISAC node transmits Dual-Orthogonality waveforms that simultaneously support data transmission toward a remote communications receiver and probing of the surrounding environment
through reflected echoes. Although all observations originate from the same transmitted signal, the resulting received signals may differ due to distinct
receiver locations, scattering geometries, and propagation conditions.

\subsection{Transmit Signal Model}
\label{subsec:tx_model}

The ISAC transmitter operates over a bandwidth $B$ during a signaling interval of
duration $T_p$. Let $T_s=1/B$ denote the sampling period and define
$K \triangleq T_p / T_s$. The discrete-time transmit signal across the $N_T$
transmit antennas is given by
$\mathbf{s}[k]
=
\big[\, s_1[k],\; s_2[k],\; \ldots,\; s_{N_T}[k] \,\big]^{\top}$
with $k = 0,1,\ldots,K-1$, where $s_n[k] \in \mathbb{C}$ denotes the sample
transmitted by the $n$-th antenna. Collecting the samples over the signaling
interval yields the transmit block
$\mathbf{S}
=
\big[\, \mathbf{s}_1,\mathbf{s}_2,\ldots,\mathbf{s}_{N_T} \,\big]
\in \mathbb{C}^{K\times N_T}$,
where $\mathbf{s}_n \in \mathbb{C}^{K\times 1}$ denotes the waveform transmitted by
the $n$-th antenna.

The waveform design follows the principle of \emph{Dual-Orthogonality} \cite{ICaSSPManzoni},
which enforces signal separability along two complementary dimensions. First,
signals transmitted from different antennas are assigned to mutually orthogonal
subspaces of the $K$-sample signal space, ensuring inter-stream orthogonality in
the absence of propagation effects. Second, this orthogonality is enforced only
over a finite and physically admissible delay window determined by the maximum
extent of the sensed scene, while the remaining degrees of freedom are exploited
for data transmission.

To enforce spatial orthogonality, each waveform is assigned to a $K_s$-dimensional
orthogonal subspace of the $K$-sample signal space, with $K_s<K$ here set as $K_s = $ $\lfloor K / N_T \rfloor$. Let
$\mathbf{C} \in \mathbb{C}^{K\times K}$ denote a unitary basis
($\mathbf{C}^{\mathsf{H}}\mathbf{C} = \mathbf{I}_K$), and let
$\mathbf{C}_n \in \mathbb{C}^{K\times K_s}$ collect a distinct subset of columns of $\mathbf{C}$
assigned to the $n$-th antenna. The transmitted waveform associated with the
$n$-th antenna is expressed as
\begin{equation}\label{general_structure}
\mathbf{s}_n = \mathbf{C}_n\,\boldsymbol{\alpha}_n,
\qquad n = 1,\ldots,N_T,
\end{equation}
where $\boldsymbol{\alpha}_n \in \mathbb{C}^{K_s\times 1}$ contains the symbol space associated with the $n$-th stream. The assigned
subspaces are mutually orthogonal, i.e.,
$\mathbf{C}_i^{\mathsf{H}}\mathbf{C}_j=\delta_{ij}\mathbf{I}_{K_s}$, ensuring ideal
inter-antenna orthogonality under single-path propagation. 

In multipath environments, delayed replicas break subspace separation. To address this effect, orthogonality is enforced only over a finite delay interval corresponding to the maximum round-trip propagation time within the scene. Let $\widetilde{\mathbf{S}}_i\in\mathbb{C}^{K_z\times K}$ denote the reduced cross-correlation (Toeplitz) operator associated with the $i$-th transmit
subspace, which models correlation with all delayed replicas (up to $K_z$ samples) of the waveform associated with $\mathbf{C}_i$. 
{\color{black} 
To account for the zero-delay tap, $K_z = \left\lceil \tau_{s} / T_s \right\rceil + 1$ corresponds to the maximum physically admissible round-trip delay of the sensed scene, with $\tau_{s} = 2R_{s}/c$, where $R_{s}$ denotes the maximum sensing range and $c$ is the speed of light.
}Note that the orthogonality construction requires a bound of $K_z<K_s$.
Specifically, the rows
of $\widetilde{\mathbf{S}}_i$ span the first $K_z$ delayed versions of $\mathbf{s}^{\mathsf{H}}_i$~\cite{ICaSSPManzoni}.
Dual-Orthogonality then requires that
\begin{equation}
\widetilde{\mathbf{S}}_i\,\mathbf{s}_n=\mathbf{0},
\qquad \forall\, i<n,
\end{equation}
i.e., the waveform transmitted by the $n$-th antenna is orthogonal, over the admissible
delay window, to the subspace spanned by all previously assigned transmit waveforms. 
Collecting these constraints yields the linear operator
\begin{equation}
\mathbf{Q}_n
\triangleq
\begin{bmatrix}
\widetilde{\mathbf{S}}_1\mathbf{C}_n \\
\widetilde{\mathbf{S}}_2\mathbf{C}_n \\
\vdots \\
\widetilde{\mathbf{S}}_{n-1}\mathbf{C}_n
\end{bmatrix}\in \mathbb{C}^{(n-1)K_z\times K_s},
\end{equation}
with the generic block $\widetilde{\mathbf{S}}_l\mathbf{C}_n \in \mathbb{C}^{K_z\times K_s}$. As $\widetilde{\mathbf{S}}_l$ is built from $\mathbf{s}_l$, $\mathbf{Q}_n$ directly depends on the preceding symbols $\boldsymbol{\alpha}_1, \dots, \boldsymbol{\alpha}_{n-1}$, resulting in a data-dependent construction. Note that for the first stream ($n=1$), the constraint set is empty, leaving the assigned subspace fully available. {\color{black} 
To prevent an empty null space for higher-index antennas, the available degrees of freedom must strictly bound $K_z$. Guaranteeing a strictly positive null-space dimension $d_n = K_s - (n-1)(K_z - 1) > 0$ for all $N_T$ streams requires $K_z \le \lfloor (K_s - 1)/(N_T - 1) \rfloor + 1$. 

For instance, considering a safe sensing margin of $R_s = 20$~m (corresponding to a 40~m round-trip distance, yielding $\tau_s \approx 0.133~\mu\mathrm{s}$), the $B=40$~MHz configuration yields $K=1600$, $K_s=200$, and $K_z=7$, ensuring strictly positive null-space dimensions $d_n \in \{200, 194, 188, 182, 176, 170, 164, 158\}$ for all $N_T=8$ streams. Similarly, $B=200$~MHz yields $K=8000$, $K_s=1000$, and $K_z=28$, resulting in $d_n \in \{1000, 973, 946, 919, 892, 865, 838, 811\}$.
}

The projected symbol vector $\boldsymbol{\alpha}_n$ is thus constrained to lie in the null space of $\mathbf{Q}_n$, i.e., $\boldsymbol{\alpha}_n \in \mathrm{Null}\!\left(\mathbf{Q}_n\right)$.  Let the columns of $\mathbf{N}_n\in \mathbb{C}^{K_s\times (K_s-(n-1)(K_z-1))}$ form an orthonormal basis for
$\mathrm{Null}\!\left(\mathbf{Q}_n\right)$.
The transmitted coefficients are then
synthesized as
\begin{equation}
\boldsymbol{\alpha}_n = \mathbf{N}_n \mathbf{x}_n,
\end{equation}
where $\mathbf{x}_n \in \mathbb{C}^{K_s-(n-1)(K_z-1)}$ denotes the vector of information symbols for the $n$-th stream. The ``$(K_z-1)$'' term accounts for the zero-delay constraint: since $\mathbf{C}_i^{\mathsf{H}}\mathbf{C}_n = \mathbf{0}$ by construction, the corresponding row in $\mathbf{Q}_n$ identically vanishes and does not consume a degree of freedom.

\begin{remark}
An important feature of this construction is its inherent flexibility. The choice of the unitary basis $\mathbf{C}$ allows the waveform family to adapt to different architectural requirements without modifying the underlying signal model. For instance, selecting $\mathbf{C}$ as a Discrete Fourier Transform (DFT) matrix reduces the model to conventional OFDM transmission. More generally, when $\mathbf{C}$ is instantiated as a random
unitary operator, the resulting waveform structure provides implicit physical-layer obfuscation by avoiding publicly fixed signaling bases.
\end{remark}

\subsection{Propagation Model}
\label{subsec:comm_channel}

We model the propagation between the ISAC transmitter and a generic receiver as a
$P$-path narrowband MIMO channel. In this subsection, we focus on the observation
collected at a remote receiver equipped with $M_C$ antennas. Let $\mathbf{y}(t)\in\mathbb{C}^{M_C\times 1}$ denote the continuous-time baseband signal
observed at this receiver. The received signal is expressed as
\begin{equation}\label{comm_rec_equation}
\mathbf{y}(t)
=
\sum_{p=0}^{P-1}
\gamma_p\, e^{j 2\pi \nu_p t}\,
\mathbf{a}_{\mathrm{rx}}(\phi_p)\,
\mathbf{a}_{\mathrm{tx}}^{\top}(\theta_p)\,
\mathbf{s}(t-\tau_p)
+
\mathbf{w}(t),
\end{equation}
where $\gamma_p \in \mathbb{C}$, $\tau_p \in \mathbb{R}$, and
$\nu_p \in \mathbb{R}$ denote the complex gain, propagation delay, and Doppler
shift of the $p$-th path, respectively, while $\theta_p$ and $\phi_p$ denote its
angle of departure and angle of arrival, respectively. The vectors
$\mathbf{a}_{\mathrm{tx}}(\cdot)\in\mathbb{C}^{N_T\times 1}$ and
$\mathbf{a}_{\mathrm{rx}}(\cdot)\in\mathbb{C}^{M_C\times 1}$ represent the transmit
and receive array steering vectors, and $\mathbf{w}(t)$ is additive white Gaussian
noise. Sampling the received signal at $K$ instants over the signaling interval yields the
discrete-time observation
$\mathbf{Y} \in \mathbb{C}^{K\times M_C}$. 
Under the standard narrowband delay–Doppler channel model\footnote{{The narrowband assumption is valid when the maximum delay drift over the pulse duration is negligible compared to the sampling period, i.e., $2 v_{\max} T_p / c \ll T_s$. Under the simulated mobility ($v_{\max} = 30$ m/s, $T_p = 40\ \mu$s), the maximum intra-pulse delay drift is 0.008 ns. This strictly satisfies the condition even for the most restrictive simulated bandwidth of $B = 200$ MHz, where $T_s = 5$ ns.}}, where fractional delays are represented through frequency-domain phase shifts and Doppler induces a discrete-time phase rotation without time-scaling of the baseband waveform, sampling \eqref{comm_rec_equation} at $K$ time instants yields the following compact matrix representation\cite{keskin2024mimootfs}:
\begin{equation}\label{final_model}
\hspace{-0.2cm}\mathbf{Y}
=
\sum_{p=0}^{P-1}
\gamma_p
\mathbf{D}(\nu_p)
\mathbf{F}^{\mathsf{H}}
\mathbf{B}(\tau_p)
\mathbf{F}
\mathbf{S}
\mathbf{a}_{\mathrm{tx}}(\theta_p)
\mathbf{a}^{\top}_{\mathrm{rx}}(\phi_p)
+
\mathbf{W},
\end{equation}
where $\mathbf{F}\in\mathbb{C}^{K\times K}$ denotes the unitary DFT matrix,
$\mathbf{B}(\tau_p)$ and $\mathbf{D}(\nu_p)\in\mathbb{C}^{K\times K}$ are diagonal
matrices defined as $\mathbf{B}(\tau_p)= \mathrm{diag}\!\left(e^{-j2\pi k \tau_p / (K T_s)}\right)_{k=0}^{K-1}$ and $\mathbf{D}(\nu_p)= \mathrm{diag}\!\left(e^{j2\pi k \nu_p T_s}\right)_{k=0}^{K-1}$, with
$k = 0,\ldots,K-1$, modeling continuous-valued delay via frequency-domain phase shifts (in the discrete-time Fourier basis) and Doppler via time-domain phase rotations, and
$\mathbf{W}\in\mathbb{C}^{K\times M_C}$ denotes additive noise.

{\color{black}
While physical multipath propagation induces linear (Toeplitz) shifts, the model in \eqref{final_model} employs a circulant matrix approximation $\mathbf{F}^{\mathsf{H}}\mathbf{B}(\tau_p)\mathbf{F}$ commonly used for compact receiver modeling \cite{keskin2024mimootfs}. Because Dual-Orthogonality waveforms do not employ a Cyclic Prefix (CP), this circulant operator constitutes an approximation of the true physical channel. However, because the signaling interval is significantly longer than the maximum channel delay spread ($T_p \gg \tau_{\max}$), the difference between the linear and circular shifts is strictly confined to a small fraction of the block boundary. As formally proven in Appendix \ref{app:circulant_bound}, the relative error energy introduced by this approximation is rigorously bounded by $2\tau_{\max}/T_p \ll 1$, thereby safely justifying the circulant approximation for equalization purposes.}

\begin{remark}
An analogous model applies to the sensing receiver co-located with the transmitter. In the monostatic case, the propagation paths correspond to target reflections and multipath scattering. The same delay--Doppler parameterization therefore provides a unified description for both communication and sensing observations.
\end{remark}

\section{Communication Receiver Processing}

This section addresses receiver-side processing at the \emph{remote
communication receiver}. Based on the signal model introduced in Sec.~III,
we analyze how multipath propagation alters the ideal orthogonality properties
of Dual-Orthogonality waveforms and how explicit estimation of the underlying
multipath components enables reliable data decoding through multipath-aware
linear equalization. The communication receiver is equipped with a uniform
linear array (ULA) of $M_C$ antennas with half-wavelength inter-element spacing,
which enables spatial filtering and subspace-based processing.

\subsection{Orthogonality Loss and Effective Channel}
\label{subsec:loss_orthogonality}

Let $\mathbf{y}_m \in \mathbb{C}^{K\times1}$ denote the signal observed at the $m$-th antenna of the remote receiver, consisting of the $m$-th column of~(\ref{final_model}). 
The decoding of stream $n$ inherently depends on the structure induced by the previously assigned streams $1,\dots,n-1$. Indeed, as explained in Sec.~III-A, the Dual-Orthogonality subspaces are constructed sequentially, so that each new subspace is defined under orthogonality constraints with respect to the previously assigned ones.
Decoding of the $n$-th transmit stream naturally starts by projecting the received signal onto the corresponding transmit subspace,
\begin{equation} \label{eq:alpha_tilde}
\widetilde{\boldsymbol{\alpha}}_{n,m}
=
\mathbf{C}^{\mathsf{H}}_n \mathbf{y}_m .
\end{equation}
Substituting, from Sec.~III, the transmit structure \eqref{general_structure} into \eqref{final_model} and applying the projection in \eqref{eq:alpha_tilde} to the observation from the m-th receive antenna, we obtain
\begin{equation}
\hspace{-0.2cm}\widetilde{\boldsymbol{\alpha}}_{n,m}
\hspace{-0.15cm}=
\hspace{-0.15cm}\sum_{p=0}^{P-1}\hspace{-0.1cm}\sum_{\ell=1}^{N_T}
\hspace{-0.17cm}\gamma_p
[\mathbf{a}_{\mathrm{rx}}(\phi_p)]_m
[\mathbf{a}_{\mathrm{tx}}(\theta_p)]_\ell
\mathbf{C}^{\mathsf{H}}_n\mathbf{H}_p\mathbf{C}_\ell
\hspace{-0.02cm}\boldsymbol{\alpha}_\ell \hspace{-0.07cm}+
\hspace{-0.08cm}\mathbf{C}^{\mathsf{H}}_n\mathbf{w}_m,
\end{equation}
where $\mathbf{H}_p= \mathbf{D}(\nu_p)\,\mathbf{F}^{\mathsf{H}}\mathbf{B}(\tau_p)\mathbf{F}$ denotes the delay--Doppler operator associated with the
$p$-th propagation path, consistently with the sampled channel model in \eqref{final_model}.
In the absence of delay and Doppler dispersion, the orthogonality property
$\mathbf{C}^{\mathsf{H}}_n\mathbf{C}_\ell=\mathbf{0}$ for $\ell\neq n$ ensures perfect
stream separation. In general multipath channels, however, the operators
$\mathbf{H}_p$ rotate the transmit subspaces, generating inter-antenna
interference (IAI) and intra-stream distortion.
To better understand the impact, we separate this expression into three distinct components as

\begin{equation}
\widetilde{\boldsymbol{\alpha}}_{n,m}
=
\underbrace{\!\left(\mathbf{C}^{\mathsf{H}}_n\mathbf{G}_{n,m}\right)\!\boldsymbol{\alpha}_n}_{\text{Distorted Signal}}
+\underbrace{\!\sum_{\ell \neq n}\!\left(\mathbf{C}^{\mathsf{H}}_n\mathbf{G}_{\ell,m}\right)\!\boldsymbol{\alpha}_\ell}_{\text{Inter-Antenna Interference}}
+\hspace{-0.15cm}\underbrace{\mathbf{C}^{\mathsf{H}}_n\mathbf{w}_m}_{\text{Filtered Noise}}
\label{eq:alphan_decomposed}
\end{equation}

where $\mathbf{G}_{\ell,m}$ will be introduced shortly. 
This decomposition highlights how multipath-induced delay--Doppler operators rotate the transmit subspaces, generating IAI. 
The expression in \eqref{eq:alphan_decomposed} serves to explicitly identify these contributions and motivate the structured equalization strategy developed in the following subsection.
To capture this effect, we separate in \eqref{eq:alpha_tilde} the desired contribution 
corresponding to $\ell = n$ from the interference terms $\ell \neq n$. 
Focusing on the component associated with stream $n$, we define the 
\emph{effective subspace channel} of the $n$-th stream and $m$-th receiving antenna as
\begin{equation}\label{InterferenceModel}
\mathbf{G}_{n,m}
\triangleq
\sum_{p=0}^{P-1}
\gamma_p
[\mathbf{a}_{\mathrm{rx}}(\phi_p)]_m
[\mathbf{a}_{\mathrm{tx}}(\theta_p)]_n
\mathbf{H}_p\mathbf{C}_n.
\end{equation}
Note that $\mathbf{G}_{n,m} \in\;\mathbb{C}^{K\times K_s}$ does not represent the full MIMO channel matrix, but the channel induced on the $n$-th transmit subspace by the physical multipath operator, capturing the delay--Doppler distortion within that subspace.
Reliable decoding therefore hinges on accurate estimation of the multipath
parameters that define $\mathbf{G}_{n,m}$.
The effective channel $\mathbf{G}_{n,m}$ can be interpreted as the original transmit subspace $\mathbf{C}_n$ undergoing a sequence of unitary delay and Doppler transformations, weighted by the path gain and transmit-array response. As a result, multipath propagation maps each transmit subspace into a rotated and distorted subspace at the receiver, whose superposition across paths defines the effective channel seen by stream $n$.
Assuming the effective subspace channel $\mathbf{G}_{n,m}$ is known, we first describe the linear equalization strategy that enables stream separation under multipath-induced subspace coupling. The subsequent subsection details how the required multipath parameters are estimated in practice.

\subsection{Multipath-Aware Equalization}

\label{subsec:equalization}
As discussed in the previous subsection, projection onto the original transmit subspaces via matched filtering does not account for multipath-induced subspace distortion and therefore cannot fully suppress IAI. We therefore introduce a multipath-aware linear equalization strategy that explicitly incorporates the effective channel structure into the combiner design.
Using the effective channel model, decoding of the $n$-th stream at the $m$-antenna is performed via linear equalization. A linear combiner $\mathbf{A}_{n,m}\in\mathbb{C}^{K\times K_s}$ produces the subspace estimate
{\color{black}
\begin{equation}
\widehat{\boldsymbol{\alpha}}_{n,m}
=
\mathbf{A}^{\mathsf{H}}_{n,m}\mathbf{y}_m,
\end{equation}
}
{\color{black} where $\mathbf{y}_m \in \mathbb{C}^{K \times 1}$ denotes the received signal vector corresponding to the $m$-th column of the sampled observation matrix $\mathbf{Y} \in \mathbb{C}^{K \times M_C}$ in \eqref{final_model}.
}

\begin{remark}
In the special case where a single path $\hat{p}$ dominates, 
the effective channel $\mathbf{G}_{\ell,m} $ reduces to the contribution 
of that path only, as $\mathbf{G}^{\hat{p}}_{\ell,m}$. 
Inter-stream leakage can therefore be suppressed by enforcing
$\mathbf{A}^{\mathsf{H}}_{n,m}\mathbf{G}^{\hat{p}} _{\ell,m}=\mathbf{0},
\quad \ell \neq n$, 
where $\mathbf{G}^{\hat{p}}_{\ell,m}
=
\gamma_{\hat{p}} 
[\mathbf{a}_{\mathrm{rx}}(\phi_{\hat{p}})]_m
[\mathbf{a}_{\mathrm{tx}}(\theta_{\hat{p}})]_\ell
\mathbf{H}_{\hat{p}}\mathbf{C}_{\ell}$,
for $\ell \neq n$.
While effective in near single-path conditions, this design becomes suboptimal in
multipath-rich environments. 
Since $\mathbf{H}_{\hat{p}}$ is rank preserving (see Appendix~\ref{app:rank_proof}), each matrix $\mathbf{G}^{\hat{p}} _{\ell,m} \in \mathbb{C}^{K \times K_s}$ 
has rank $K_s$. 
The interference matrix formed by concatenating 
$\{\mathbf{G}^{\hat{p}}_{\ell,m}\}_{\ell \neq n}$ 
has full column rank $(N_T-1)K_s$ and therefore admits a non-trivial null space, enabling exact cancellation of dominant-path interference.
\end{remark}

Reliable decoding requires accounting
for the full superposition of multipath components captured by the effective
channel model $\mathbf{G}_{n,m}$ presented in \eqref{InterferenceModel}. To this end, we define the aggregate interference
matrix associated with stream $n$ as $\mathbf{H}^{\mathrm{int}}_{n,m} \in \mathbb{C}^{K \times (N_T-1)K_s}$, given by:
$\mathbf{H}^{\mathrm{int}}_{n,m}
=
\big[\,
\mathbf{G}_{1,m},\ldots,\mathbf{G}_{n-1,m},
\mathbf{G}_{n+1,m},\ldots,\mathbf{G}_{N_T,m}
\,\big].$

{\color{black}
To separate the streams, the design of the linear combiner $\mathbf{A}_{n,m}$ is cast as a constrained Zero-Forcing (ZF) subspace receiver. Specifically, $\mathbf{A}_{n,m}$ is constructed to exactly suppress the interference by projecting the desired effective channel onto the orthogonal complement of the interference subspace:
\begin{equation}\label{eq:ZF_combiner}
\mathbf{A}_{n,m}^{\mathsf H}
=
\left(
\mathbf{G}_{n,m}^{\mathsf H}
\boldsymbol{\Pi}_{\mathbf{H}^{\mathrm{int}}_{n,m}}^{\perp}
\mathbf{G}_{n,m}
\right)^{-1}
\mathbf{G}_{n,m}^{\mathsf H}
\boldsymbol{\Pi}_{\mathbf{H}^{\mathrm{int}}_{n,m}}^{\perp},
\end{equation}
where $\boldsymbol{\Pi}_{\mathbf{H}^{\mathrm{int}}_{n,m}}^{\perp} \in \mathbb{C}^{K \times K}$ is the orthogonal projection onto the left null space of $\mathbf{H}^{\mathrm{int}}_{n,m}$ (i.e., the null space of $(\mathbf{H}^{\mathrm{int}}_{n,m})^{\mathsf{H}}$). By construction, the linear combiner in \eqref{eq:ZF_combiner} effectively suppresses the IAI component in the decoded subspace \eqref{eq:alphan_decomposed} arising from multipath-induced orthogonality leakage.}

{\color{black}
While this closed-form ZF approach guarantees deterministic processing latency and effective stream separation, it is subject to the standard ZF trade-off: in environments where multipath components are highly correlated or the effective subspaces are poorly conditioned, the projection may result in noise enhancement. While spatial parameter estimation inherently exploits the full receive array ($M_C$ antennas) to resolve paths, the proposed equalizer resolves interference in the code/delay domain. Consequently, subsequent symbol equalization and decoding are performed on a single receive antenna, although spatial diversity combining (e.g., Maximum Ratio Combining) could optionally be applied across the array.
}

The equalizer derived under this framework relies on the estimated $\mathbf{G}_{n,m}$, establishing a direct link between multipath
parameter estimation and data decoding performance.

\begin{remark}
{The proposed receiver design relies exclusively on linear constrained optimization and admits closed-form solutions, resulting in deterministic processing latency and hardware-friendly implementation. Unlike iterative or non-linear receivers, the proposed approach preserves low complexity while explicitly exploiting multipath structure for interference mitigation.}

\end{remark}

\subsection{Pilot-Aided Multipath Parameter Estimation}
\label{subsec:multipath_estimation}

\begin{algorithm}[!t]
\footnotesize
\caption{Multipath Parameters Estimation}
\label{alg:channel_est}
\begin{algorithmic}[1]
\Require Received signal $\mathbf{Y}$, Pilot $\mathbf{u}$, Array manifold $\mathbf{a}(\cdot)$
\Ensure Parameter set $\Psi = \{\hat{\tau}_p, \hat{\phi}_p, \hat{\nu}_p, \hat{\gamma}_p\}_{p=0}^{P-1}$

\Statex \textit{Spatial Estimation (Subspace-Based):}
\State $\hat{\mathbf{R}} \leftarrow \frac{1}{K}\mathbf{Y}^{\mathsf{H}}\mathbf{Y}$
\State $\mathbf{E}_n \leftarrow \text{NoiseSubspace}(\hat{\mathbf{R}})$
\State $\{\hat{\phi}_p\}_{p=0}^{{P}-1} \leftarrow
\underset{\phi}{\text{find\_peaks}}
\left(
\frac{1}{\mathbf{a}(\phi)^{\mathsf{H}} \mathbf{E}_n \mathbf{E}_n^{\mathsf{H}} \mathbf{a}(\phi)}
\right)$

\For{$p = 0, \dots, P-1$}
    \Statex \textit{Beamforming \& Delay Estimation:}
    \State Form spatial filter $\mathbf{w}_p \leftarrow \mathbf{a}(\hat{\phi}_p)$
    \State $z_p[k] \leftarrow \mathbf{w}_p^{\mathsf{H}} \mathbf{y}_k$
    \State $\hat{\tau}_p \leftarrow
    $ as in \eqref{eq:tau_p}
    \Statex \textit{Doppler Estimation (ML Search):}
    \State Construct delay--angle compensated template
    \State $\hat{\nu}_p \leftarrow$ as in \eqref{eq:vp}
\EndFor

\Statex \textit{Gain Recovery (Least Squares):}
\State $\mathbf{y} \leftarrow \mathrm{vec}(\mathbf{Y})$
\State Construct LS design matrix 
$\mathbf{H}_{\mathrm{LS}} = [\mathbf{h}_0, \dots, \mathbf{h}_{P-1}]$
\State $\mathbf{h}_p$ as in \eqref{eq:LS}
\State $\hat{\boldsymbol{\gamma}} \leftarrow
(\mathbf{H}_{\mathrm{LS}}^{\mathsf{H}} \mathbf{H}_{\mathrm{LS}})^{-1}
\mathbf{H}_{\mathrm{LS}}^{\mathsf{H}} \mathbf{y}$
\State \Return $\Psi$
\end{algorithmic}
\end{algorithm}

A known pilot sequence $\mathbf{u}\in\mathbb{C}^{K_u}$, with $K_u < K_s$, is embedded
into the first transmit waveform prior to the Dual-Orthogonality construction,
requiring:
\begin{equation} \label{eq:pilot-consistency}
\left[ \mathbf{s}_1 \right]_{1:K_u} = \mathbf{u}
\quad \Leftrightarrow \quad
\mathbf{C}_u \boldsymbol{\alpha}_1 = \mathbf{u}
\end{equation}
where $\mathbf{C}_u \in \mathbb{C}^{K_u \times K_s}$ consists of the first $K_u$ rows
of $\mathbf{C}_1$. 
{\color{black}
The pilot is embedded within the existing waveform block and therefore
does not require an additional time interval, frequency allocation, or
transmit subspace. However, allocating these fixed pilot samples consumes a portion of the available signaling dimensions, thereby introducing a payload overhead. A detailed analysis of this overhead, together with the constructive solution satisfying \eqref{eq:pilot-consistency}, is provided in Appendix~\ref{app:pilot_consistency}.
}
This pilot sequence enables estimation of the multipath parameters required to form the effective channel.

The channel estimation pipeline first identifies the spatial signatures. From the
received block $\mathbf{Y}$, the sample covariance matrix
$\hat{\mathbf{R}} = \frac{1}{K}\mathbf{Y}^{\mathsf{H}}\mathbf{Y}$ is formed, and the angles of
arrival are estimated by identifying the $P$ dominant peaks of the
subspace-based spatial spectrum as
\begin{equation}
\hat{\phi}_p
=
\arg\max_{\phi}
\frac{1}{\mathbf{a}^{\mathsf{H}}_{\mathrm{rx}}(\phi)
\mathbf{E}_n \mathbf{E}_n^{\mathsf{H}}
\mathbf{a}_{\mathrm{rx}}(\phi)},
\end{equation}
for $p = 0,\dots,P-1$ and where $\mathbf{E}_n$ denotes the noise subspace obtained from the eigenvalue decomposition of $\hat{\mathbf{R}}$, by collecting the eigenvectors corresponding
to the smallest eigenvalues, under the standard assumption that the number of
dominant multipath components is smaller than the receive array dimension.
{\color{black} Here, $P$ denotes the number of dominant, resolvable multipath components. While practical systems can estimate $P$ via MUSIC peak thresholding or standard information-theoretic criteria, the estimation procedure in Algorithm~\ref{alg:channel_est} and the numerical evaluations in Section~\ref{NumericalSim} assume $P$ is known a priori. This strictly isolates the interference-cancellation performance of the proposed equalizer from model-order estimation errors.}
Conditioned on these angles, the received signal is spatially filtered 
to emphasize individual path contributions.\footnote{
To reduce complexity, angle and delay are estimated sequentially instead of via a joint two-dimensional search, assuming spatial filtering sufficiently suppresses residual multipath components. The impact of residual leakage is evaluated in simulations for an $8\times1$ receive array, consistent with~\cite{keskin2021jstsp}
}  Specifically, for each estimated angle $\hat{\phi}_p$, 
$\mathbf{a}_{\mathrm{rx}}(\hat{\phi}_p)$ is applied to $\mathbf{Y}$ as a beamforming vector.

For equalization purposes, it is sufficient to work with a
stream-dependent effective path gain. Specifically, the complex
coefficient associated with each propagation path can be
defined as $\gamma_{p,n} \triangleq \gamma_p
[\mathbf{a}_{\mathrm{tx}}(\theta_p)]_n$,
which directly parameterizes the contribution of path $p$ to stream $n$. 
{\color{black} Because the pilot sequence is embedded exclusively in the first
transmit stream to minimize overhead (Appendix~\ref{app:pilot_consistency}),
the remote receiver cannot explicitly resolve the Angle of Departure (AoD, $\theta_p$).
Consequently, the receiver employs a far-field approximation, absorbing the unresolved
spatial phase into the common estimated macroscopic gain such that $\gamma_{p,n} \approx \hat{\gamma}_p$
for all $n$. While this introduces a residual spatial phase error, the constrained
Zero-Forcing projection in \eqref{eq:ZF_combiner} remains robust. The interference subspace
is primarily anchored by the known temporal codes ($\mathbf{C}_\ell$) and the estimated
delay--Doppler operators ($\mathbf{H}_p$), allowing the equalizer to effectively suppress
dominant interference by targeting its temporal support rather than relying on strict spatial
phase coherence.}
This equivalent representation preserves the structure of $\mathbf{G}_{n,m}$ while simplifying
the receiver-side processing.

Let $\mathbf{y}_k \in \mathbb{C}^{M_C\times 1}$ denote the $k$-th row of $\mathbf{Y}$, 
i.e., the spatial receive vector at time index $k$. 
Delay estimation is performed by correlating the beamformed signal 
with the known pilot sequence $\mathbf{u}$, yielding
\begin{equation} \label{eq:tau_p}
    \hat{\tau}_p
    =
    {\arg \max_{\tau}}
    \Big|
    \sum_{k=0}^{K_u-1}
    \left( \mathbf{a}_{\mathrm{rx}}^{\mathsf{H}}(\hat{\phi}_p)\mathbf{y}_k \right)
    u^*(k-\tau)
    \Big| .
\end{equation}
To resolve fractional delays with sub-sample precision, parabolic interpolation is then applied to the correlation magnitude of the peak and its immediate adjacent samples.
Subsequently, the Doppler shifts $\hat{\nu}_p$ (directly related to the relative velocities
$\hat{v}_p$) are estimated via a Maximum Likelihood (ML) grid search. 
While AoA estimation exploits the spatial covariance structure of the full received block, 
delay and Doppler estimation rely on the known pilot structure to enable coherent 
projection-based estimation in the temporal domain. 
This involves maximizing the normalized coherent projection of the observation onto the
delay-and-angle-compensated pilot template $\mathbf{h}$:
\begin{equation} \label{eq:vp}
    \hat{\nu}_p =
    \arg \max_\nu \,
    \frac{\left| \mathbf{h}^{\mathsf{H}}(\hat{\tau}_p, \hat{\phi}_p, \nu) \mathbf{y} \right|^2}
    {\| \mathbf{h}(\hat{\tau}_p, \hat{\phi}_p, \nu) \|^2} .
\end{equation}
where $\mathbf{y}\triangleq \mathrm{vec}(\mathbf{Y})\in\mathbb{C}^{K M_C \times 1}$, and
$\mathbf{h}(\hat{\tau}_p, \hat{\phi}_p, \nu)
=
\mathbf{a}_{\mathrm{rx}}(\hat{\phi}_p) \otimes
\mathbf{D}(\nu)\mathbf{F}^{\mathsf{H}}\mathbf{B}(\hat{\tau}_p)\mathbf{F}\bar{\mathbf{u}}$,
with $\bar{\mathbf{u}} = [\mathbf{u}^{\top}, \mathbf{0}_{1 \times (K-K_u)}]^{\top}$ denoting the zero-padded pilot. This properly incorporates the estimated arrival angle, delay, and Doppler phase rotation.

Finally, the complex path gains $\{\hat{\gamma}_p\}$ are recovered via linear
least-squares (LS). We define the LS design matrix
$\mathbf{H}_{\mathrm{LS}}=[\mathbf{h}_{0},\ldots, \mathbf{h}_{P-1}]$, where
\begin{equation} \label{eq:LS}
\mathbf{h}_p
=
\mathbf{a}_{\mathrm{rx}}(\hat{\phi}_p)\otimes
\mathbf{D}(\hat{\nu}_p)\mathbf{F}^{\mathsf{H}}\mathbf{B}(\hat{\tau}_p)\mathbf{F}\,\bar{\mathbf{u}}.
\end{equation}
The LS estimate of the path gains is then given by
\begin{equation}
\hat{\boldsymbol{\gamma}}
=
(\mathbf{H}_{\mathrm{LS}}^{\mathsf{H}}\mathbf{H}_{\mathrm{LS}})^{-1}
\mathbf{H}_{\mathrm{LS}}^{\mathsf{H}} \mathbf{y}.
\end{equation}
The estimation procedure is summarized in Algorithm~\ref{alg:channel_est}.

{\color{black} It is important to note that the proposed parameter estimation procedure operates sequentially (spatial angle $\to$ delay $\to$ Doppler) rather than jointly. While a joint multi-dimensional ML search over $(\phi, \tau, \nu)$ would yield theoretically optimal estimates, it is computationally prohibitive for practical real-time ISAC systems. The sequential approach trades strict optimality for computational feasibility. Because the delay and Doppler estimations are conditioned on the initial angle estimates, this architecture is theoretically susceptible to error propagation. However, when the receive array size is sufficiently large to spatially resolve the dominant paths, the spatial filtering step (beamforming) effectively decouples the multipath components. This isolates the delay-Doppler estimation for each path, thereby bounding the residual error propagation and allowing the sequential estimator to approach the Cram\'er-Rao Bound (CRB) at high SNR regime, as corroborated by the numerical results in Section \ref{NumericalSim}.

}

\subsection{Theoretical Performance Limits}
\label{subsec:param_est_perf} To quantify the fundamental limits of the proposed channel estimation framework, we derive a global Cramér-Rao Bound (CRB) for the effective subspace channel matrix $\mathbf{G}_{n,m}$ in \eqref{InterferenceModel}. We define the global parameter vector $\boldsymbol{\psi} = [\boldsymbol{\psi}_0^\top, \dots, \boldsymbol{\psi}_{P-1}^\top]^\top$, where the parameters for the $p$-th path ($p=0,\dots,P-1$) are given by $\boldsymbol{\psi}_p = [\tau_p, \phi_p, \nu_p, \Re\{\gamma_p\}, \Im\{\gamma_p\}]^\top$. $\gamma_p \in \mathbb{C}$ denotes the complex path gain, decomposed into its real and imaginary parts. The Fisher Information Matrix (FIM) for $\boldsymbol{\psi}$, denoted as $\mathbf{J}_{\boldsymbol{\psi}}$, is computed via the Slepian-Bangs formula based on the signal model in \eqref{final_model}.
However, since communication reliability depends on the accuracy of the effective channel reconstruction rather than individual physical parameters, we bound the mean squared error of the matrix $\mathbf{G}_{n,m}$ itself. Let $\mathbf{g}(\boldsymbol{\psi}) = \text{vec}(\mathbf{G}_{n,m})$ represent the vectorized effective channel as a function of the physical parameters. The Jacobian matrix of this transformation is $\mathbf{T} = \partial \mathbf{g} / \partial \boldsymbol{\psi}$.

The global lower bound on the variance of the channel estimator is then given by:
\begin{equation}
    \mathbb{E}\left[\|\hat{\mathbf{G}}_{n,m} - \mathbf{G}_{n,m}\|_F^2\right] \geq \text{Tr}\left( \mathbf{T} \mathbf{J}_{\boldsymbol{\psi}}^{-1} \mathbf{T}^{\mathsf{H}} \right).
    \label{eq:global_crb}
\end{equation}
This bound captures the coupled impact of delay, Doppler, angle, and complex gain errors on the final subspace projection, providing a direct benchmark for the estimation performance reported in Section \ref{NumericalSim}.

\section{SAR Imaging Processing}
\label{sec:env_reconstruction}

This section focuses on imaging processing at the ISAC node using the reflected echoes of the transmitted Dual-Orthogonality waveforms. Unlike the remote communication receiver, which estimates channel parameters for data
decoding, the sensing receiver can exploit mobility to synthesize a large aperture and perform high-resolution imaging. Crucially, the proposed waveform
enables coherent multi-antenna Synthetic Aperture Radar (SAR) processing without requiring frequency- or time-division across transmitters, allowing all antennas to contribute to imaging at full bandwidth and full aperture \cite{manzoni2023tgrs}.

\subsection{Sensing Observation Model and Virtual Array Structure}

In the monostatic sensing configuration, the SAR observation corresponds to a particular instance of the propagation model in Sec.~III. 
Specifically, at aperture position of the sensing receiver $\mathbf{x}_{q}$, the received baseband signal is obtained from \eqref{comm_rec_equation} by considering a co-located transmit and sensing array, where the channel captures both direct target reflections and higher-order multipath echoes.
The continuous-time echo can be written as
\begin{equation} \label{eq:sensing_signal}
\hspace{-0.1cm}\mathbf{y}_{q}(t)
\hspace{-0.1cm}=
\hspace{-0.15cm}\sum_{p=0}^{P-1}
\hspace{-0.15cm}\gamma_{p,q}
e^{j2\pi \nu_{p,q} t}
\mathbf{a}_{\mathrm{rx}}(\phi_{p,q})
\mathbf{a}_{\mathrm{tx}}^{\top}(\theta_{p,q})
\mathbf{s}\big(t-\tau_{p,q}\big)
\hspace{-0.1cm}+
\hspace{-0.1cm}\mathbf{w}_{q}(t),
\end{equation}
where $\Psi_{q} = \{ \tau_{p,q}, \theta_{p,q}, \phi_{p,q}, \gamma_{p,q}, \nu_{p,q} \}_{p=0}^{P-1}$ represents the set of multipath parameters associated with aperture position $\mathbf{x}_{q}$.
Matched filtering (pulse compression) \cite{manzoni2023tgrs} is then performed to separate the spatial channels. The observation at the $m$-th receive antenna $y_{q,m}(t)$, consisting in the $m$-th column of (\ref{eq:sensing_signal}), is cross-correlated with the waveform $s_{n}(t)$ transmitted by the $n$-th antenna. {The resulting pulse-compressed complex envelope for the $(n,m)$ transmit-receive pair is given by
\begin{equation}\label{pulsec}
    s_{q,n,m}(t) = \int y_{q,m}(t+\tau) s_{n}^*(\tau) \, d\tau,
\end{equation}
whose discrete-time sampled version is $s_{q,n,m}[k] = s_{q,n,m}(kT_s)$.} By combining the outputs across all $N_T$ transmit and $M_S$ receive antennas, a virtual array with $N_T M_S$ elements is synthesized.
Under half-wavelength inter-element spacing at the resulting virtual elements positions, given by proper array geometry, the virtual array enlarges the effective aperture and enables grating-lobe-free operation while improving angular resolution and discrimination of closely spaced scatterers.

Unlike OFDM-based multi-transmit SAR schemes that rely on FDMA or subcarrier interleaving to preserve orthogonality, thereby reducing per-transmitter bandwidth or introducing periodic sidelobes, the Dual-Orthogonality waveform preserves orthogonality across transmit antennas within a prescribed delay--Doppler region without bandwidth partitioning, as described in Sec.~III-A. As a result, SAR processing can coherently exploit the full signal bandwidth across all transmit antennas.

\subsection{Time-Domain Back-Projection Imaging}

\begin{algorithm}[!t]
\footnotesize
\caption{MIMO-SAR Image Formation}
\label{alg:TDBP_compact}
\begin{algorithmic}[1]
\Require Range-compressed data $\{s_{q,n,m}[k]\}$, positions $\{\mathbf{x}_q\}$, phase centers $\{\mathbf{x}_n^{\rm TX},\mathbf{x}_m^{\rm RX}\}$, grid $\mathcal{G}$
\Ensure Reflectivity map $I(\mathbf{r})$
\ForAll{$\mathbf{r}\in\mathcal{G}$}
  \State $I(\mathbf{r}) \gets 0$
  \ForAll{$(q,n,m)$}
    \State $\tau \gets \tau_{q,n,m}(\mathbf{r})$
    \State $v \gets \mathrm{interp}\!\left(s_{q,n,m},\tau\right)$
    \State $I(\mathbf{r}) \gets I(\mathbf{r}) + v\,e^{j2\pi f_c \tau}$
  \EndFor
\EndFor
\State $I(\mathbf{r}) \gets {|I(\mathbf{r})|}/{\max_{\mathbf{r} \in \mathcal{G}} |I(\mathbf{r})|}$
\end{algorithmic}
\end{algorithm}

Imaging processing is performed at the ISAC node, which repeatedly transmits waveform blocks along the synthetic aperture. Let us consider the aperture positions $\mathbf{x}_q$ with $q \in \{1,\dots,Q\}$.
At each aperture position, the received echo follows the monostatic propagation model, obtaining (\ref{pulsec}).

Under the standard SAR assumption of negligible intra-pulse Doppler and coherent phase evolution across aperture positions, residual phase evolution is compensated by the exponential term in the back-projection step.
For a candidate image location $\mathbf{r}$, the two-way propagation delay associated with aperture position $q$ and transmit–receive pair $(n,m)$ is
$\tau_{q,n,m}(\mathbf{r})
=
{
\|\mathbf{x}_q + \mathbf{x}^{\mathrm{TX}}_n - \mathbf{r}\|
+
\|\mathbf{x}_q + \mathbf{x}^{\mathrm{RX}}_m - \mathbf{r}\|
}/{c}$,
where $\mathbf{x}^{\mathrm{TX}}_n$ and $\mathbf{x}^{\mathrm{RX}}_m$ denote local phase-center coordinates relative to the platform reference frame.
The MIMO-SAR time-domain back-projection (TDBP) operator is then given by
\begin{equation}\label{eq:TDBP_unnorm}
\tilde{I}(\mathbf{r})
=
\sum_{q=1}^{Q}
\sum_{n=1}^{N_T}
\sum_{m=1}^{M_S}
s_{q,n,m}\!\left(\tau_{q,n,m}(\mathbf{r})\right)
\, e^{j 2\pi f_c \tau_{q,n,m}(\mathbf{r})}.
\end{equation}
The reflectivity map is obtained by coherent accumulation across time acquisitions (synthetic aperture) and spatial MIMO channels after interpolation of the range-compressed signals at the geometrically consistent delays  \cite{manzoni2023tgrs}.
 To ensure that
image intensity is independent of the number of transmitted pulses and path-loss
effects, the final SAR image is normalized as
\begin{equation}
    I(\mathbf{r})
    =
    \frac{|\tilde{I}(\mathbf{r})|}
    {\max_{\mathbf{r}} |\tilde{I}(\mathbf{r})|}.
    \label{eq:TDBP_norm}
\end{equation}
Algorithm~\ref{alg:TDBP_compact} summarizes the implemented TDBP procedure.

{\color{black}
\subsection{Orthogonality under Delay--Doppler Perturbations}
A key property of the proposed waveform for SAR imaging is its exact
delay-domain orthogonality at the discrete design points, while general
delay--Doppler propagation may introduce residual cross-ambiguity.
} 
Under the sampled narrowband model introduced in Sec.~III, each propagation path is represented in discrete time by
\begin{equation}
\mathbf{H}(\tau,\nu)
=
\mathbf{D}(\nu)\mathbf{F}^{\mathsf{H}}\mathbf{B}(\tau)\mathbf{F},
\end{equation}
which coincides with the delay--Doppler transformation appearing in \eqref{final_model}. In the monostatic SAR configuration, this operator models the delay--Doppler transformation associated with each reflector.

\begin{proposition}
Let $\mathbf{s}_i$ and $\mathbf{s}_\ell$ denote the waveforms transmitted
by antennas $i$ and $\ell$, respectively. Let
$\mathcal{A}_d = \{\tau_k = k T_s\}_{k=0}^{K_z-1}$
denote the discrete set of admissible delays enforced during the waveform
construction.
{\color{black}
Let $\mathbf{T}_k\in\mathbb{C}^{K\times K}$ denote the linear
zero-padded integer-delay operator associated with
$\tau_k=kT_s$. For every constrained waveform pair $i<\ell$, the
Dual-Orthogonality construction guarantees
}
\begin{equation}
{\color{black}
\mathbf{s}^{\mathsf{H}}_i\mathbf{T}_k\mathbf{s}_\ell=0,
\qquad
k=0,\ldots,K_z-1,\quad i<\ell.
}
\label{eq:exact_integer_delay_orthogonality}
\end{equation}
{\color{black}
Therefore, exact cross-stream orthogonality is guaranteed at the
integer-delay, zero-Doppler design points explicitly included in the
waveform construction. For fractional delays and/or nonzero Doppler
shifts, the corresponding delay--Doppler operator is not included in the
null-space constraints, and the cross-term
$\mathbf{s}^{\mathsf H}_i\mathbf{H}(\tau,\nu)\mathbf{s}_\ell$
is generally nonzero. Its magnitude depends on the specific waveform
realization, fractional-delay offset, Doppler shift, bandwidth, and
signaling duration. These off-grid terms are therefore treated as
residual cross-ambiguity rather than as exact orthogonality conditions.
}
\end{proposition}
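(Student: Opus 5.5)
The plan is to unwind the construction of Sec.~III-A and read off \eqref{eq:exact_integer_delay_orthogonality} directly from the null-space constraint. First I will make precise the relation between the reduced Toeplitz operator $\widetilde{\mathbf{S}}_i\in\mathbb{C}^{K_z\times K}$ and the zero-padded shift operators. The rows of $\widetilde{\mathbf{S}}_i$ are the first $K_z$ delayed versions of $\mathbf{s}_i^{\mathsf H}$, so its $(k+1)$-th row can be written as $\mathbf{s}_i^{\mathsf H}\mathbf{T}_k$ for $k=0,\ldots,K_z-1$, with $\mathbf{T}_0=\mathbf{I}_K$. Here $\mathbf{T}_k$ is the lower-shift matrix with ones on the $k$-th subdiagonal, and it models the linear, non-circular integer delay $kT_s$. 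Hence
\begin{equation*}
\widetilde{\mathbf{S}}_i\,\mathbf{s}_\ell=\big[\mathbf{s}_i^{\mathsf H}\mathbf{T}_0\mathbf{s}_\ell,\ \ldots,\ \mathbf{s}_i^{\mathsf H}\mathbf{T}_{K_z-1}\mathbf{s}_\ell\big]^{\top},
\end{equation*}
so the claim is exactly the vector identity $\widetilde{\mathbf{S}}_i\mathbf{s}_\ell=\mathbf{0}$ for all $i<\ell$. If instead the shift direction is meant to act on $\mathbf{s}_i$, I will use the same argument with $\mathbf{T}_k^{\top}$. That version follows from the same constraint set, since correlation at lag $k$ is equivalent to correlation at lag $-k$ with the roles of the two waveforms exchanged.

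Second, I will verify that the synthesized waveform satisfies this identity. By \eqref{general_structure} and the synthesis $\boldsymbol{\alpha}_\ell=\mathbf{N}_\ell\mathbf{x}_\ell$, we have $\mathbf{s}_\ell=\mathbf{C}_\ell\mathbf{N}_\ell\mathbf{x}_\ell$. Then $\widetilde{\mathbf{S}}_i\mathbf{s}_\ell=(\widetilde{\mathbf{S}}_i\mathbf{C}_\ell)\mathbf{N}_\ell\mathbf{x}_\ell$, and $\widetilde{\mathbf{S}}_i\mathbf{C}_\ell$ is the $i$-th block row of $\mathbf{Q}_\ell$. Since the columns of $\mathbf{N}_\ell$ span $\mathrm{Null}(\mathbf{Q}_\ell)$, we get $\mathbf{Q}_\ell\mathbf{N}_\ell=\mathbf{0}$, so every block row vanishes. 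This holds for every $i\le \ell-1$ and every information vector $\mathbf{x}_\ell$.

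The ordering matters here. $\mathbf{Q}_\ell$ is built from $\mathbf{s}_1,\ldots,\mathbf{s}_{\ell-1}$, which are already fixed when stream $\ell$ is generated. The data-dependent, sequential construction is therefore consistent, and the identity holds for the actual transmitted waveforms and not only for a nominal one. The $k=0$ term also holds automatically, because $\mathbf{C}_i^{\mathsf H}\mathbf{C}_\ell=\mathbf{0}$ gives $\mathbf{s}_i^{\mathsf H}\mathbf{s}_\ell=\boldsymbol{\alpha}_i^{\mathsf H}\mathbf{C}_i^{\mathsf H}\mathbf{C}_\ell\boldsymbol{\alpha}_\ell=0$. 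This is consistent with the $(K_z-1)$ count of consumed degrees of freedom.

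Third, existence requires $\mathbf{N}_\ell$ to be nontrivial. I will invoke the dimension condition already stated, $d_\ell=K_s-(\ell-1)(K_z-1)>0$, which is guaranteed by $K_z\le\lfloor (K_s-1)/(N_T-1)\rfloor+1$. This ensures that nonzero waveforms satisfying the constraints exist for all $\ell\le N_T$.

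For the qualitative part of the statement about off-grid terms, I will only exhibit non-vanishing rather than prove a bound. The constraints involve only the finitely many operators $\mathbf{T}_k$. For fractional $\tau$ or nonzero $\nu$, the operator $\mathbf{H}(\tau,\nu)=\mathbf{D}(\nu)\mathbf{F}^{\mathsf H}\mathbf{B}(\tau)\mathbf{F}$ is generally not in the span of $\{\mathbf{T}_k\}$. This is for two reasons: it is circulant rather than linear, and a fractional delay has a sinc-type expansion over all integer shifts, including ones outside $[0,K_z-1]$. So $\mathbf{s}_i^{\mathsf H}\mathbf{H}(\tau,\nu)\mathbf{s}_\ell$ is a linear functional that the null-space constraint does not annihilate. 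A generic data realization then gives a nonzero value, since the set of $\mathbf{x}_\ell$ making it zero is a proper hyperplane.

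The main obstacle is the first step: pinning down the exact indexing convention of $\widetilde{\mathbf{S}}_i$ (shift direction, zero-padding, linear versus circulant) so that its rows coincide with $\mathbf{s}_i^{\mathsf H}\mathbf{T}_k$. Once that identification is made, the rest is a one-line null-space argument.
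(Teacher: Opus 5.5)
Your proposal is correct and follows the paper's proof. From $\mathbf{Q}_\ell\boldsymbol{\alpha}_\ell=\mathbf{0}$ you extract the block $\widetilde{\mathbf{S}}_i\mathbf{C}_\ell\boldsymbol{\alpha}_\ell=\widetilde{\mathbf{S}}_i\mathbf{s}_\ell=\mathbf{0}$ and read its rows as $\mathbf{s}_i^{\mathsf H}\mathbf{T}_k$, which is exactly the identification the paper asserts without further justification. Your flag on the shift direction is well placed: Sec.~III's wording (rows as delayed copies of $\mathbf{s}_i^{\mathsf H}$) would literally give $\mathbf{s}_i^{\mathsf H}\mathbf{T}_k^{\top}\mathbf{s}_\ell=0$, which is the stated identity with the roles of the two waveforms reversed, not an equivalent one, because the lag window is one-sided. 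Your $k=0$, dimension-count and off-grid remarks go beyond what the paper proves. The off-grid argument is only heuristic: non-vanishing over the data really requires $\mathbf{C}_\ell^{\mathsf H}\mathbf{H}^{\mathsf H}(\tau,\nu)\mathbf{s}_i\notin\mathrm{Range}(\mathbf{Q}_\ell^{\mathsf H})$, not merely $\mathbf{H}(\tau,\nu)\notin\mathrm{span}\{\mathbf{T}_k\}$.
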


\begin{proof}
{\color{black}
For $\ell>i$, the transmitted waveform is
$\mathbf{s}_\ell=\mathbf{C}_\ell\boldsymbol{\alpha}_\ell$, where
$\boldsymbol{\alpha}_\ell$ is selected in the null space of
$\mathbf{Q}_\ell$. Hence,
$\mathbf{Q}_\ell\boldsymbol{\alpha}_\ell=\mathbf{0}$.
Considering the block of $\mathbf{Q}_\ell$ associated with waveform $i$, as defined in (3), gives
}
\begin{equation}
{\color{black}
\widetilde{\mathbf{S}}_i
\mathbf{C}_\ell\boldsymbol{\alpha}_\ell
=
\widetilde{\mathbf{S}}_i\mathbf{s}_\ell
=
\mathbf{0}.
}
\end{equation}
{\color{black}
As introduced in (2), the $k$-th row of
$\widetilde{\mathbf{S}}_i$ evaluates the cross-correlation between
$\mathbf{s}_i$ and the $k$-sample linearly delayed version of
$\mathbf{s}_\ell$. Therefore,
$\mathbf{s}^{\mathsf H}_i\mathbf{T}_k\mathbf{s}_\ell=0$
for every $k=0,\ldots,K_z-1$, which proves
\eqref{eq:exact_integer_delay_orthogonality}.
}
\end{proof}

To highlight the impact of multipath and Doppler in
\eqref{eq:TDBP_unnorm}, we can write the pulse-compressed echo at aperture
position $q$ as
$s_q(t)=\sum_p \gamma_{p,q}\, a_{p,q}(t)$.
Substituting into \eqref{eq:TDBP_unnorm} yields a double sum over $q$ and
$p$.
For a true scatterer at $\mathbf r_0$, the geometric delays
$\tau_q(\mathbf r_0)$ align across $q$, producing constructive coherent
integration. Multipath components that are not geometrically consistent
with $\mathbf r$ generate phase histories that do not align across $q$,
and therefore spread their energy over the image grid.
A residual Doppler error $\Delta\nu$ induces a phase drift
$\Delta\phi_q \approx 2\pi \Delta\nu t_q$. Over an aperture of duration
$T_{\mathrm{ap}}$, controlled defocusing is obtained when
$2\pi|\Delta\nu|T_{\mathrm{ap}}\ll \pi$, i.e.,
$|\Delta\nu|\ll 1/(2T_{\mathrm{ap}})$.
{\color{black}
The coherent back-projection mechanism further improves robustness by reinforcing geometrically consistent phase histories, while components that are inconsistent with a candidate image location do not accumulate coherently across the synthetic aperture. As a result, residual cross-ambiguity effects arising from fractional delays or nonzero Doppler are inherently suppressed by the coherent integration process, even though exact waveform orthogonality is formally guaranteed only at the discrete design points.
}

\begin{small}
\begin{table}[!b]
\centering
\footnotesize
\caption{Simulation Parameters}
\label{tab:sim_params}
\renewcommand{\arraystretch}{1.05}
\resizebox{\columnwidth}{!}{%
\begin{tabular}{l c c}
\toprule
\textbf{Parameter} & \textbf{Symbol} & \textbf{Value} \\ 
\midrule
Channel profiles & A / B & Sparse / rich geometric multipath \\
Max delay spread & $\tau_{\max}$ & $\approx 1\,\mu$s (150 m range) \\
Tx velocity & $v$ & [3--30] m/s \\
Bandwidth & $B$ & 40, 200 MHz \\
Carrier frequency & $f_c$ & 28 GHz \\
Pulse duration & $T_p$ & 40 $\mu$s \\
Modulation & -- & QPSK \\
ISAC MIMO array & $N_T \times M_S$ & $8 \times 8$ \\
Comm Rx & $M_C \times 1$ & $8 \times 1$ \\
Active subcarriers & $N_{\text{sub}}$ & {\color{black}1600, 8000} \\
Cyclic prefix & $N_{\text{cp}}$ & {\color{black}51, 255 (1.275 $\mu$s)} \\
\midrule
\multicolumn{3}{c}{\color{black}\textit{Dual-Orthogonality Parameters (for $R_s = 20$ m)}} \\
\color{black}Total samples per block & \color{black}$K$ & \color{black}1600, 8000 \\
\color{black}Samples per stream & \color{black}$K_s$ & \color{black}200, 1000 \\
\color{black}Zero-correlation window & \color{black}$K_z$ & \color{black}7, 28 \\
\color{black}Pilot overhead & \color{black}$K_u$ & \color{black}30 \\
\bottomrule
\end{tabular}%
}
\end{table}
\end{small}

\section{Numerical Simulations}
\label{NumericalSim}

\begin{figure}[t!]
    \centering
    \subfloat[Profile A: Sparse Multipath Scenario\label{fig:profileA}]{%
        \includegraphics[width=0.45\linewidth]{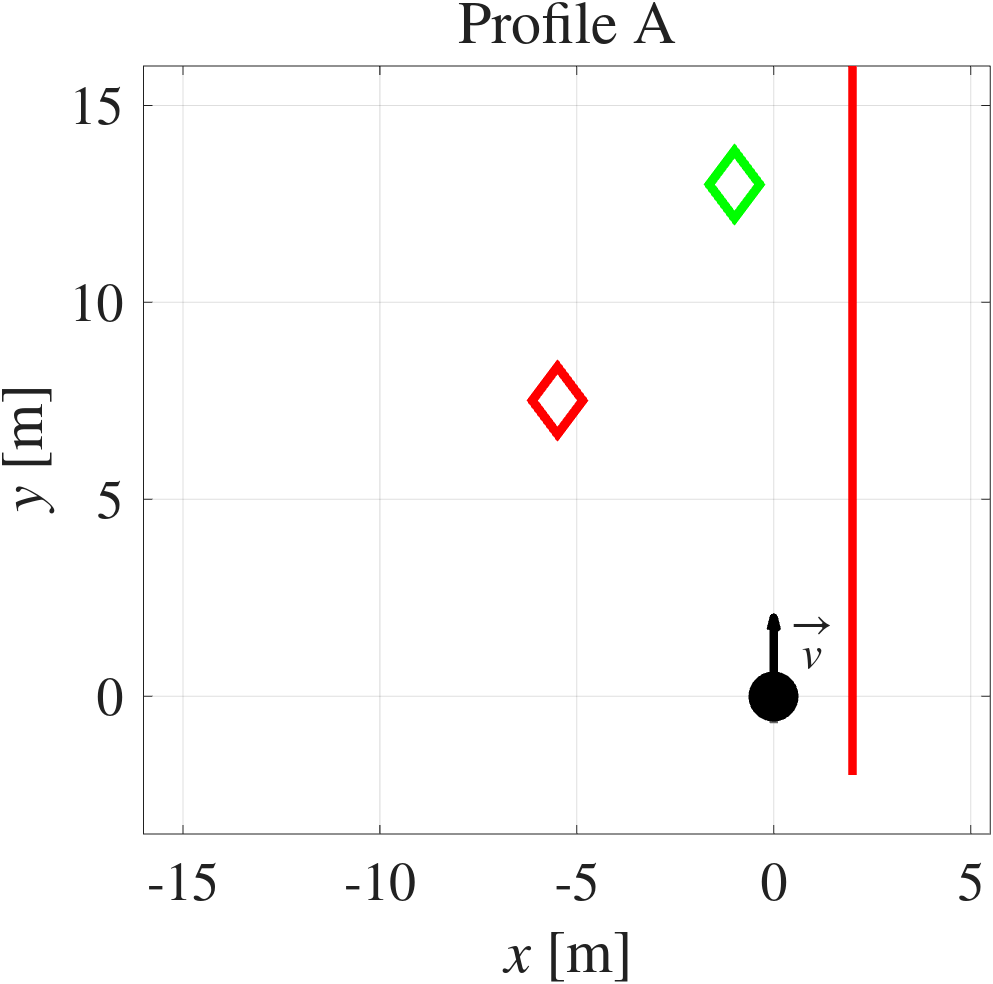}%
    }
    \hfill 
    \subfloat[Profile B: Close spaced Reflectors Multipath Scenario\label{fig:profileB}]{%
        \includegraphics[width=0.46\linewidth]{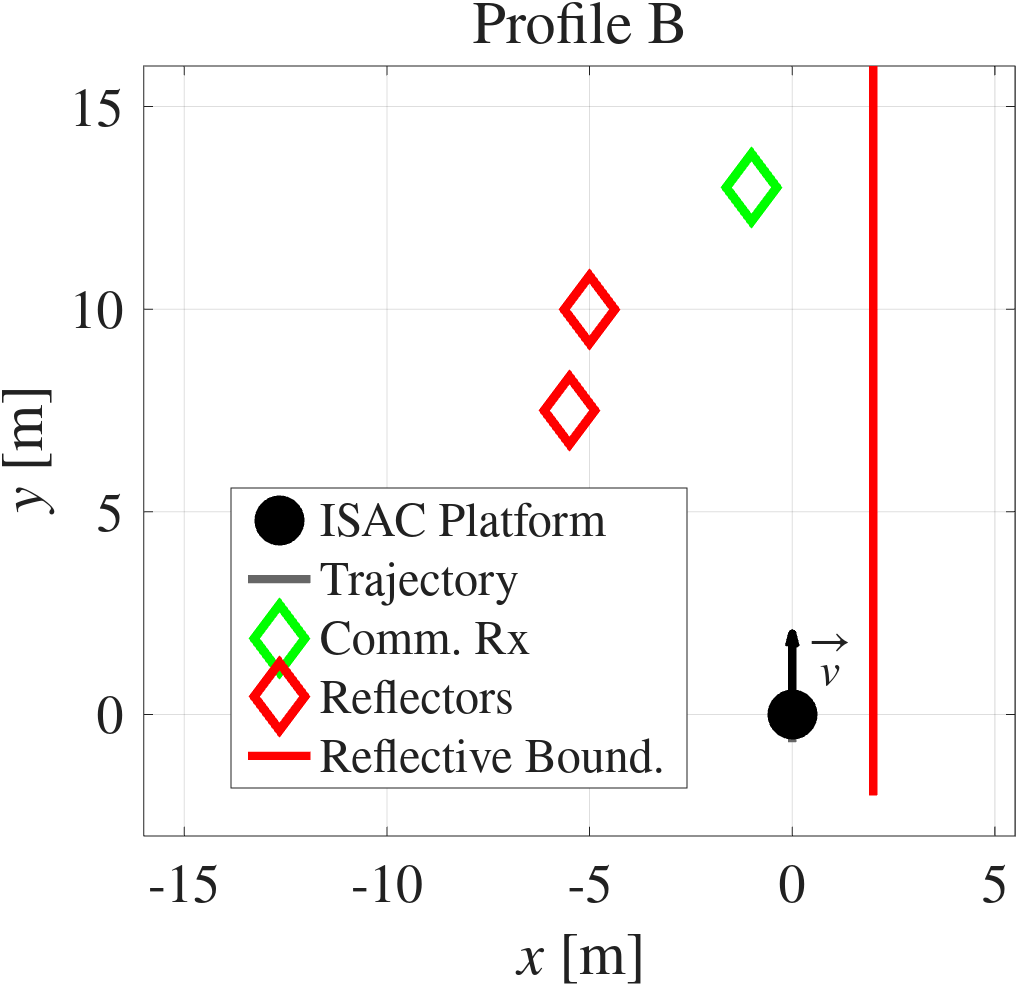}%
    }
    
{\color{black}
\caption{Simulated geometric multipath environments used for numerical
evaluation: (a) Profile~A, characterized by sparse and spatially separated
reflectors; and (b) Profile~B, characterized by closely spaced reflectors
and a richer, more challenging multipath structure.}
\label{fig:sim_profiles}
}
\end{figure}

This section evaluates the proposed Dual-Orthogonality ISAC framework through numerical simulations under dynamic multipath conditions. Communication and sensing performances are analyzed separately and then jointly compared against OFDM-based and \textcolor{black}{OTFS} baselines.

The simulation parameters are summarized in Table~\ref{tab:sim_params}. 
Propagation is modeled using two geometric, ray-based multipath scenarios, 
referred to as \emph{Profile A} and \emph{Profile B}, illustrated in Fig.~\ref{fig:sim_profiles}. 
The geometric construction preserves phase coherence across paths, meaning that 
the complex path phases are determined by the actual propagation distances, 
ensuring physically consistent phase relationships across multipath components 
and aperture positions, which is essential for coherent SAR imaging evaluation. 
Profile~A represents a sparse multipath environment with well-separated reflectors, whereas Profile~B introduces closely spaced reflectors and stronger scattering interactions, yielding a richer and more challenging propagation structure. In both scenarios, the positions of the reflectors and scattering surfaces are randomized across simulation runs while preserving the characteristic spacing constraints of each profile (in particular, the close reflector spacing in Profile~B) and maintaining all elements within the considered field of view. This controlled randomization introduces statistical variability in path phases and gains while preserving the underlying geometric structure of the scene.
To assess robustness under mobility, the transmitting ISAC platform moves with velocities ranging from $3$~m/s to a peak of $30$~m/s, inducing Doppler shifts consistent with typical urban and highway dynamic scenarios.

{\color{black}
The OFDM cyclic prefix $N_{\text{CP}}$ is dimensioned to exceed the maximum communication delay spread ($\approx 1\,\mu\mathrm{s}$). In principle, $\tau_s$ (and thus $K_z$) can be chosen smaller than the OFDM $T_{\text{CP}}$ if a narrower sensing region is targeted. For Dual-Orthogonality waveforms, the nulling window $K_z$ is indeed deliberately dimensioned to exclusively cover the target radar sensing range ($R_s = 20$~m), ensuring physical realizability and large null-space dimensions across all transmit streams as detailed in Section III-A. Any residual long-tail communication multipath arriving from beyond this sensing range is subsequently suppressed by the receiver's linear equalizer.
}
For fair comparison, the OFDM-NeqSI {\color{black} and OTFS} configurations match the Dual-Orthogonality waveform parameters ($B = 40$~MHz, $T_p = 40~\mu$s). {\color{black}The OFDM baselines employ $N_{\text{sub}} = 1600$ and $8000$ active subcarriers, respectively, with a cyclic prefix of $N_{\text{CP}} = 51$ and $255$ samples ($1.275~\mu$s), exceeding the maximum simulated delay spread (150~m range, approximately $1~\mu$s) to ensure interference-free reception under the considered multipath conditions.} {\color{black} The OTFS baseline maps symbols onto a delay-Doppler grid consisting of 16 Doppler bins and 100 delay bins (scaled to $16 \times 500$ for the 200~MHz sensing evaluations), matching the same block lengths.}
For the OFDM baseline, receiver processing employs standard per-antenna LS channel estimation at pilot subcarriers followed by frequency-domain interpolation. To ensure a fair comparison, these independent LS estimates are then coherently combined across the receive array via spatial phase-alignment (beamforming) using the estimated angles of arrival (AoA), thereby suppressing spatial noise and maximizing the effective estimation SNR prior to equalization.

\subsection{Communication Performance}

\begin{figure} [!t]
    \centering
    \includegraphics[width=0.7\linewidth]{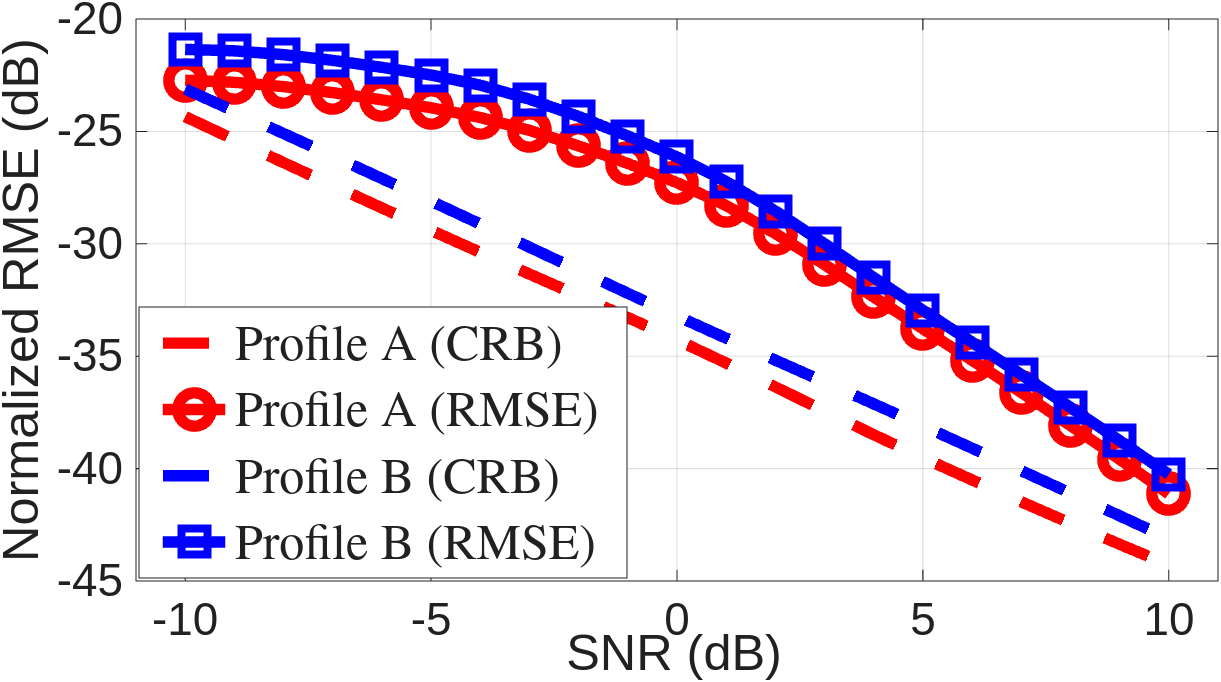}
    \caption{Normalized Mean Squared Error in estimation of $\mathbf{G}_{n,m}$ vs Cramér--Rao Bound for Profile A and Profile B.}
    \label{fig:CRB}
\end{figure}

\begin{figure} [!t]
    \centering
    \includegraphics[width=0.75\linewidth]{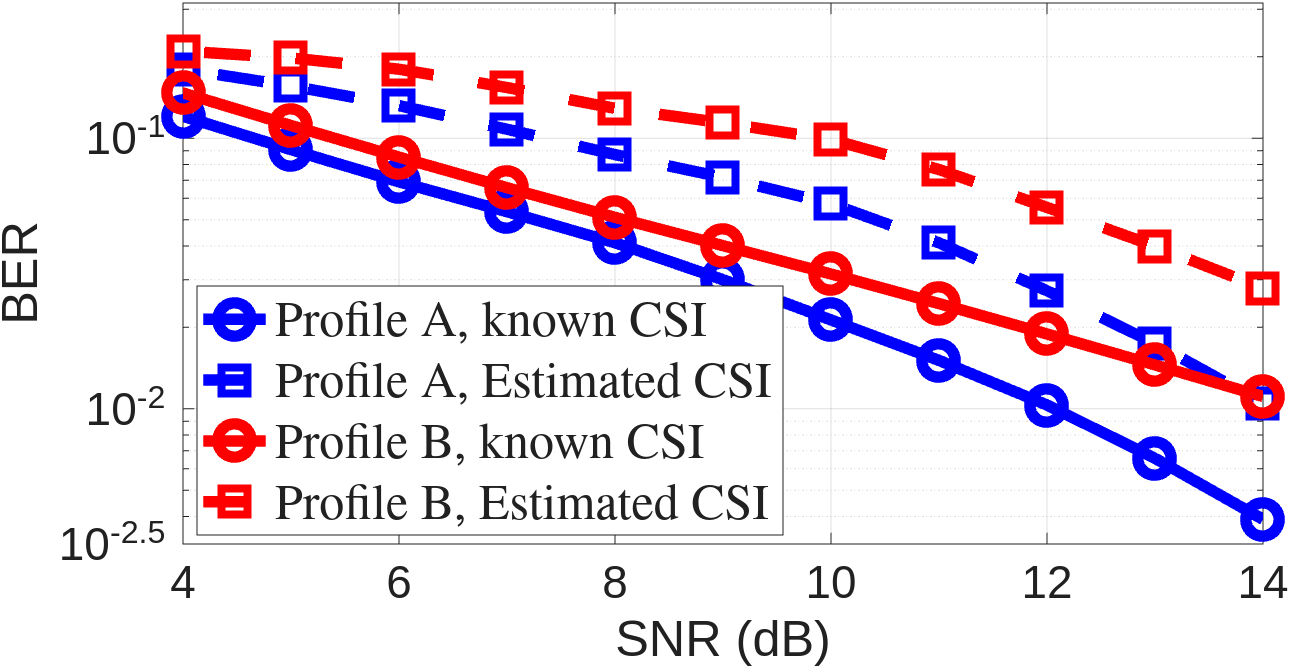}
    \caption{Bit error rate versus SNR, evaluated on a single receive antenna, with $T_p=40\,\mu\text{s}$, and $B=200\,\text{MHz}$.}
    \label{fig:mergedBer}
\end{figure}

\begin{figure*}[!t]
    \centering
    \subfloat[Dual-Orthogonality]{
        \includegraphics[width=0.43\linewidth]{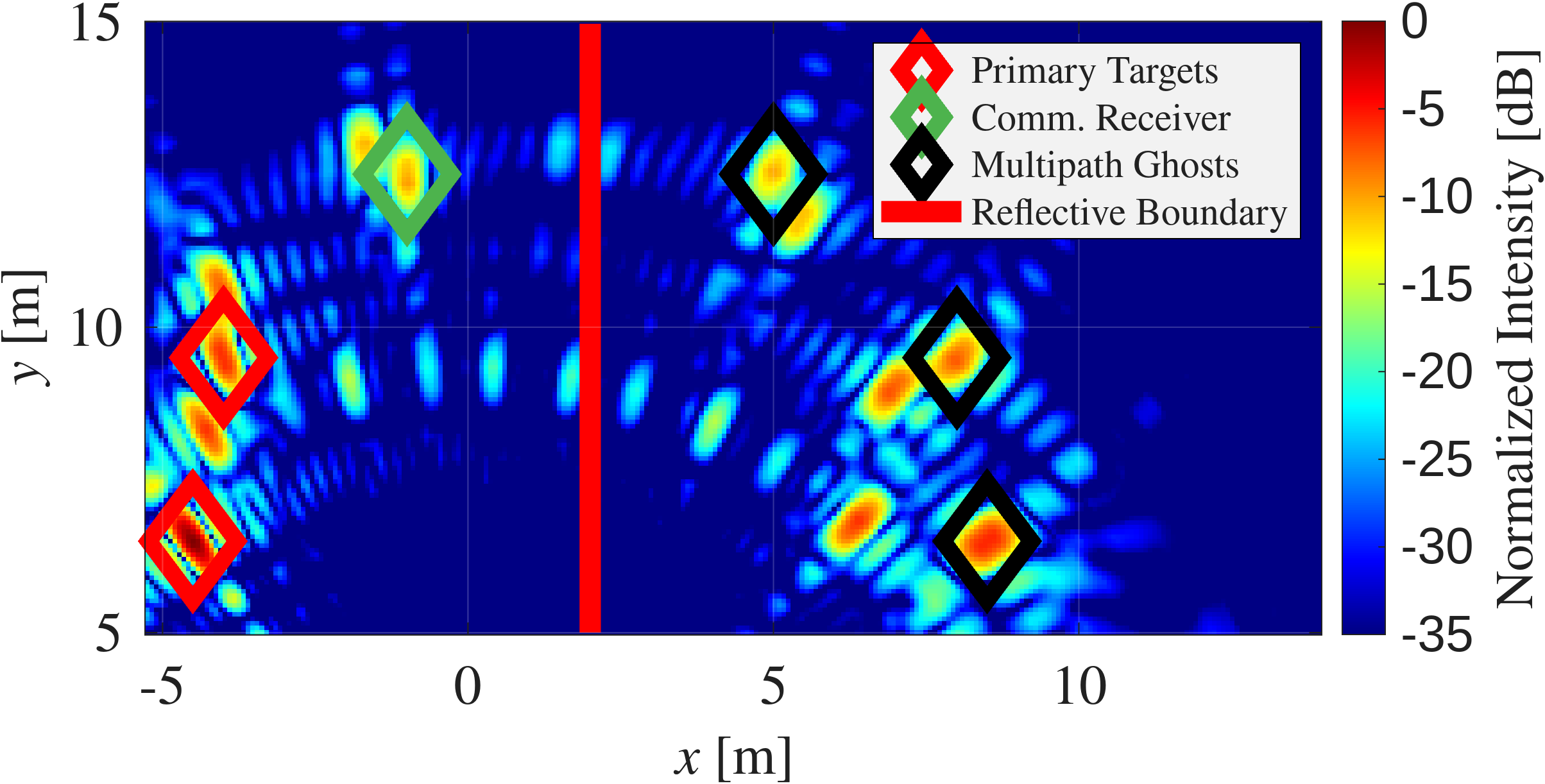}
        \label{subfig:radar_DO}
    }\hfill
    \subfloat[OFDM-FDMA~\cite{cohen2020fdma}]{
        \includegraphics[width=0.43\linewidth]{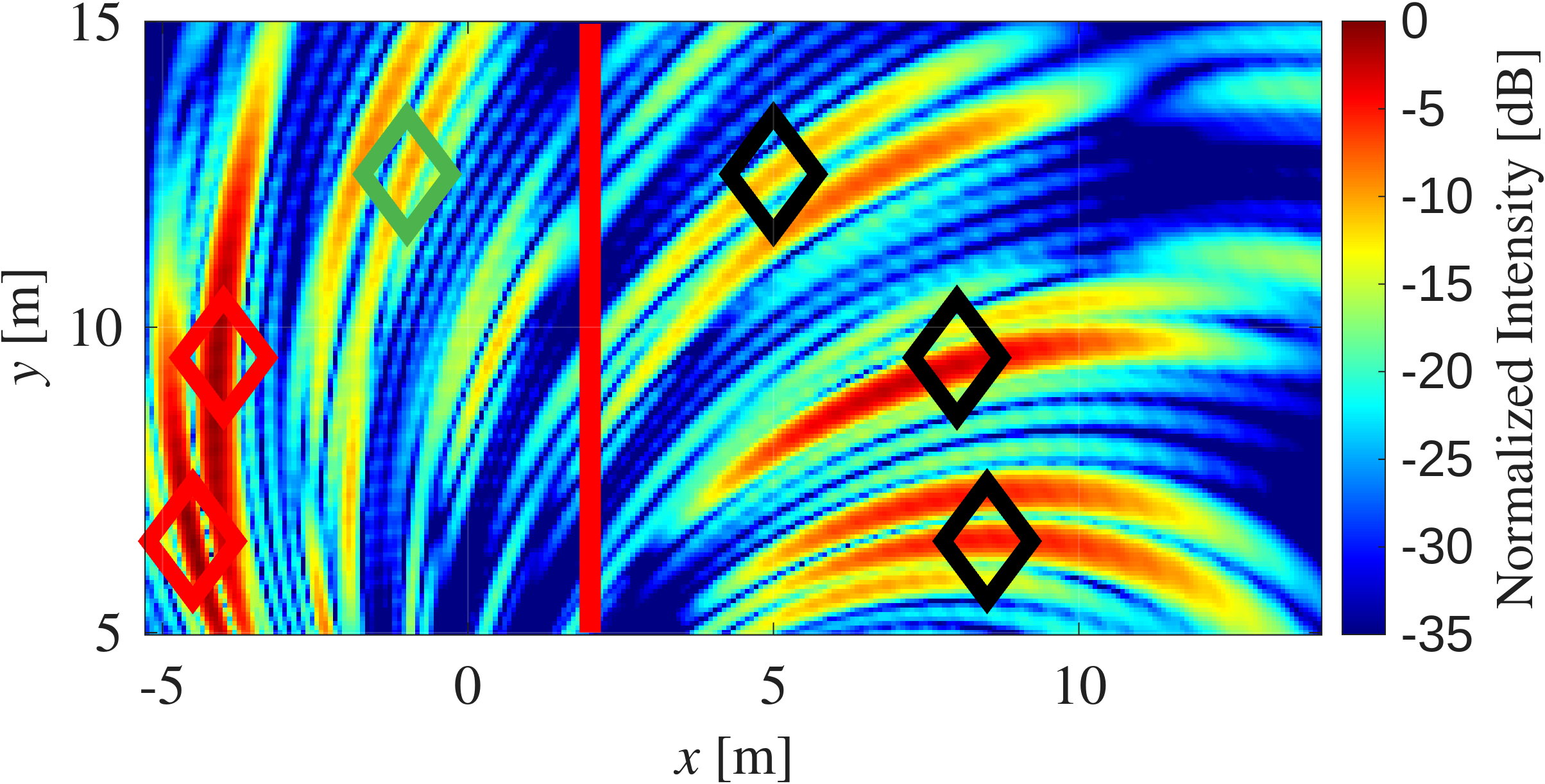}
        \label{subfig:radar_FDMA}
    } \\ 
    \subfloat[OFDM-NeqSI~\cite{Hakobyan2020NeqSI_Optimized}]{
        \includegraphics[width=0.43\linewidth]{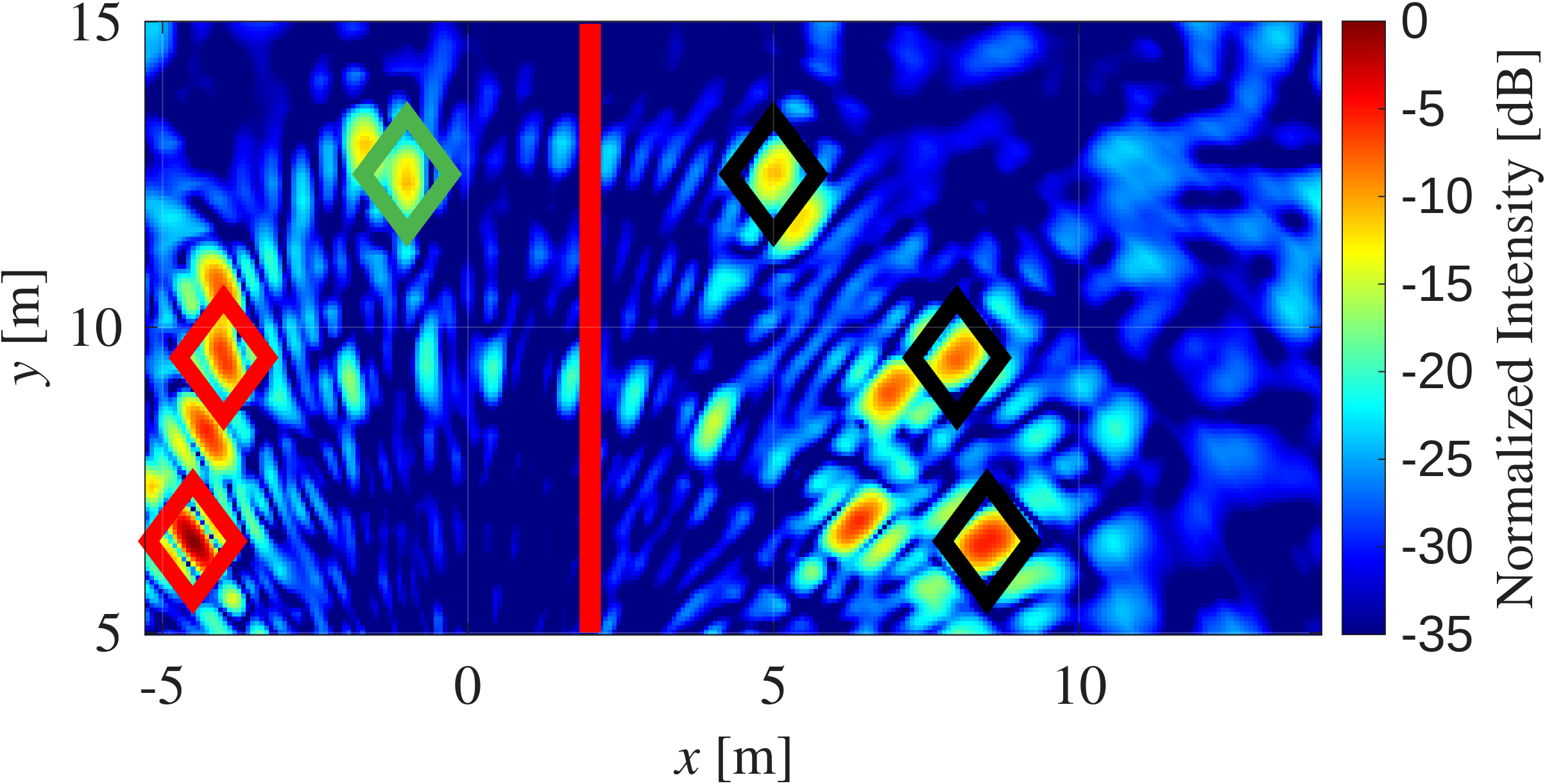}
        \label{subfig:radar_NeqSI}
    }\hfill
    \subfloat[OTFS~\cite{gaudio2020otfsisac}]{
        \includegraphics[width=0.43\linewidth]{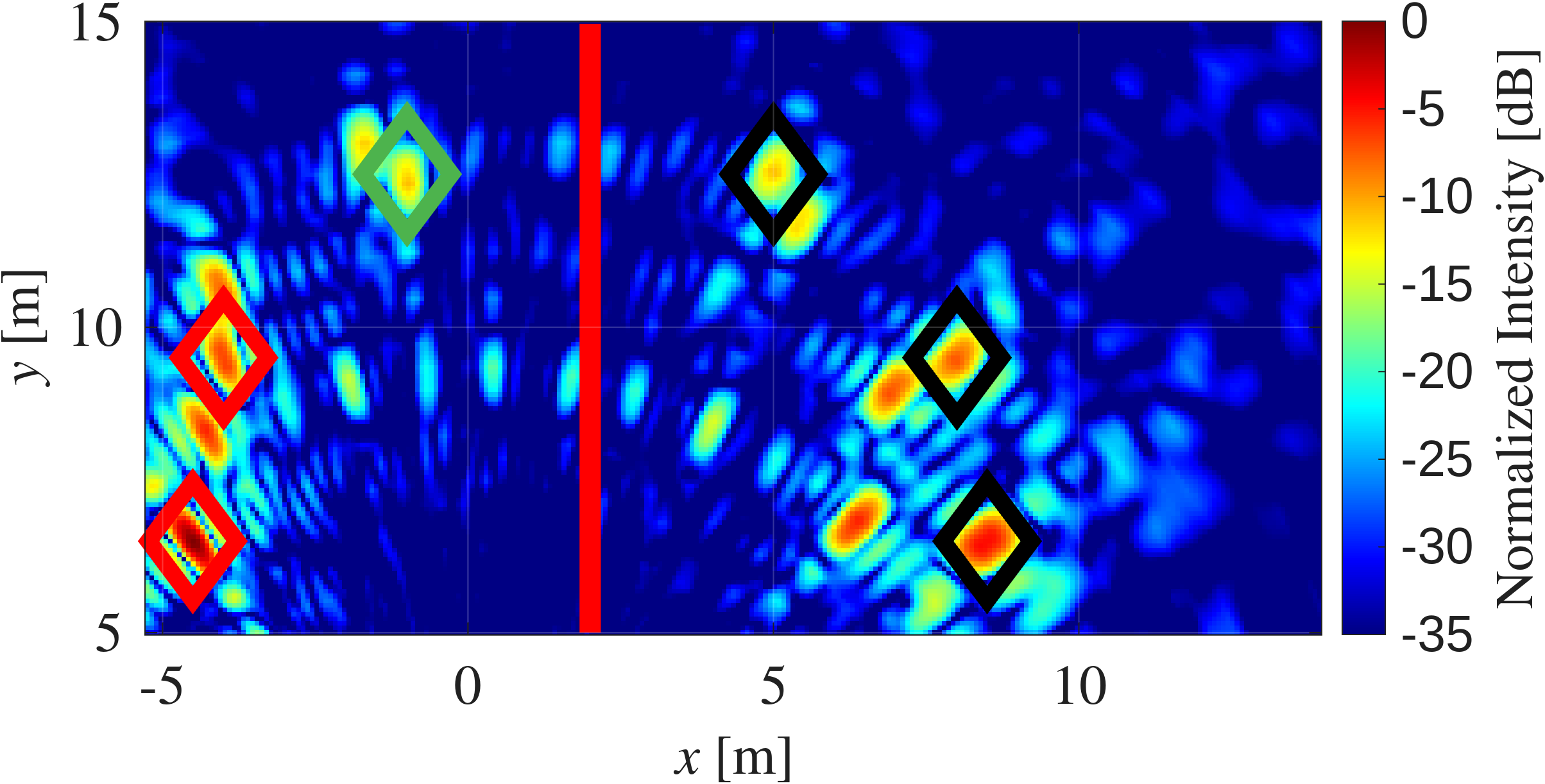}
        \label{subfig:radar_OTFS}
    }
    {\color{black}
    \caption{Pulse-compressed TDBP radar images in Profile~B (Fig.~\ref{fig:profileB}, $B=200$~MHz): (a) proposed Dual-Orthogonality waveform; (b) OFDM-FDMA, showing reduced range resolution from bandwidth partitioning; (c) OFDM-NeqSI, exhibiting elevated sidelobes and structured artifacts; and (d) OTFS, displaying a higher interference floor due to delay--Doppler cross-ambiguity.
    \label{fig:radar_comparison}}
    }
    
\end{figure*}
\begin{figure} [!t]
\centering
    \includegraphics[width=0.75\linewidth]{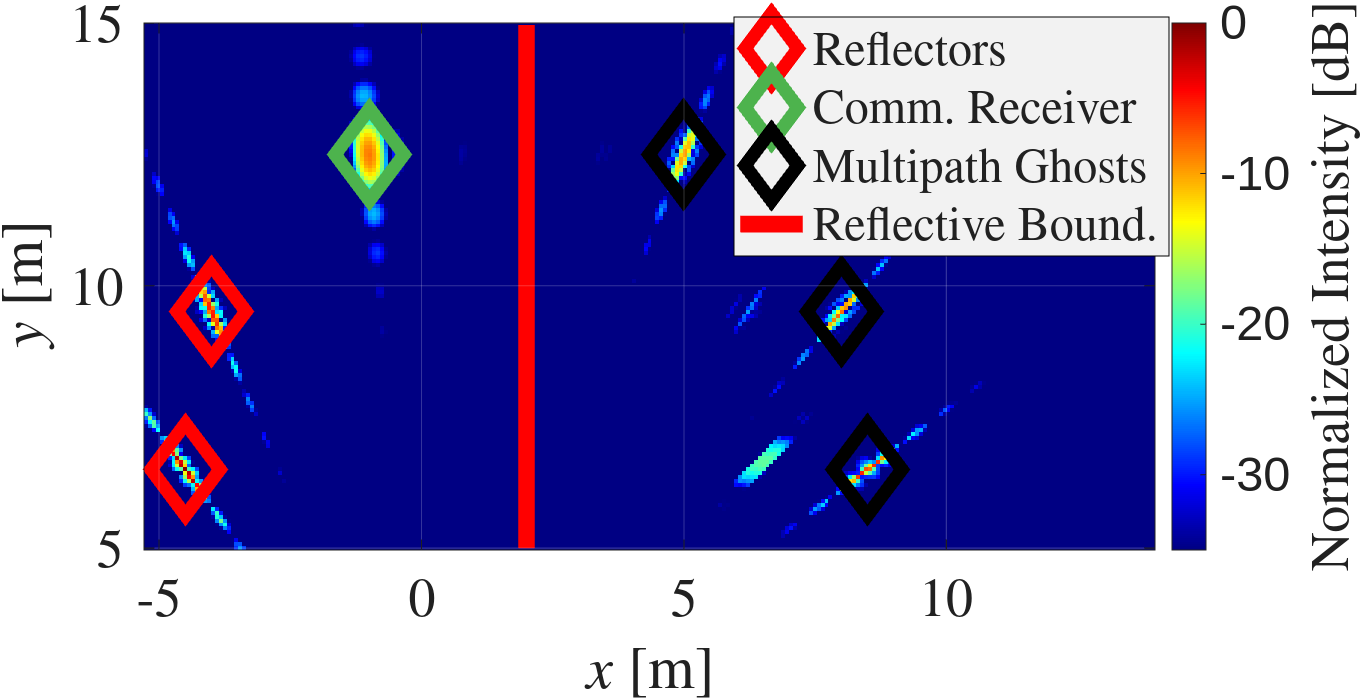}
    \caption{MIMO-SAR image using Dual-Orthogonality waveforms under dynamic multipath conditions ($B=200$~MHz, 1024 SAR acquisitions). }
    \label{fig:SAR}
\end{figure}

\begin{figure}[!ht]
\centering
\includegraphics[width=0.75\linewidth]{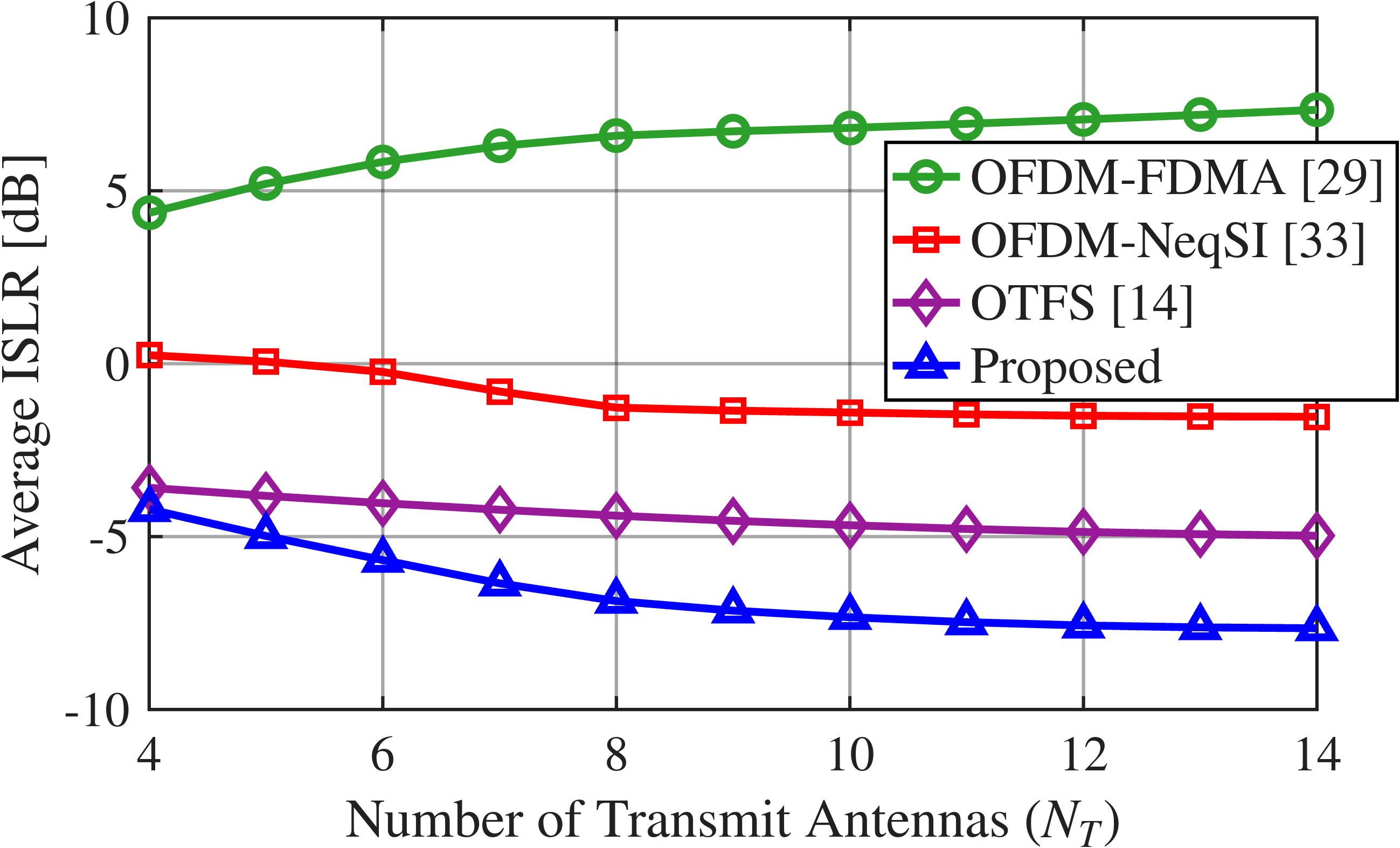}
\caption{\textcolor{black}{Average ISLR as a function of the number of transmit antennas $N_T$.}}
\label{fig:ISLR}
\end{figure}

We first evaluate the accuracy of the proposed pilot-aided multipath parameter estimation, since reliable data decoding directly depends on the quality of the reconstructed effective subspace channel. {The Normalized Mean Squared Error (NMSE) of the estimated effective channel, defined as $\mathbb{E}[\|\hat{\mathbf{G}}_{n,m} - \mathbf{G}_{n,m}\|_F^2] / \|\mathbf{G}_{n,m}\|_F^2$, is compared against the theoretical CRB derived in Sec.~\ref{subsec:param_est_perf}.}

Figure~\ref{fig:CRB} shows that the estimation error approaches the CRB as the SNR increases, confirming the statistical efficiency of the proposed estimator in the high-SNR regime. 
The overall trends for Profile~A and Profile~B are closely aligned, since in both cases the estimator operates on an equivalent linear observation model of the composite channel. 
The main difference lies in the overall channel energy, which results in a vertical shift of both the CRB and the corresponding NMSE curves, rather than in a change of convergence behavior.

Having established the reliability of the channel estimation stage, we next isolate the equalization capability by considering an ideal benchmark with perfect channel state information (CSI). In this case, the effective subspace channel is assumed to be known exactly at the receiver, as in Sec. IV-B, thereby eliminating estimation-induced errors and exposing only the impact of multipath-induced subspace coupling.
Figure~\ref{fig:mergedBer} reports the resulting bit error rate (BER) performance. The received signal $\mathbf{y}$ is processed using the linear combiners $\mathbf{A}_n$ designed according to the Zero-Forcing and Matched-Filtering criteria described in Sec.~IV. The perfect-CSI curve represents an upper bound, corresponding to decoding with exact knowledge of the orthogonal transmit subspaces. Although unattainable in practice, this benchmark quantifies the degradation caused by delay–Doppler distortions in multipath channels.
The proposed aggregate multipath-aware combiner significantly reduces inter-antenna interference and achieves BER performance close to the ideal reference across the entire SNR range, demonstrating the effectiveness of the structured linear equalization strategy.

\subsection{Imaging Performance}

We evaluate sensing and imaging performance using the framework described in Sec.~V. Figure~\ref{fig:radar_comparison} compares radar images obtained via pulse compression and TDBP under multipath conditions for {\color{black} four} waveform families in a multi-target scenario, which is reported in Fig. \ref{fig:profileB}.
The proposed Dual-Orthogonality waveform (Fig.~\ref{subfig:radar_DO}) accurately localizes all true targets, while multipath-induced replicas appear as structured ghost responses that remain spatially separable from the direct echoes.
In contrast, OFDM-FDMA (Fig.~\ref{subfig:radar_FDMA}) suffers from reduced range resolution due to bandwidth partitioning across transmit antennas. The resulting narrower per-transmitter bandwidth produces broader mainlobes and limits discrimination of closely spaced targets. OFDM-NeqSI (Fig.~\ref{subfig:radar_NeqSI}) restores full-band operation and achieves improved range resolution, confirming its role as a strong imaging baseline. However, its non-uniform spectral allocation increases sensitivity to frequency selectivity and residual synchronization errors under channel dynamics, leading to elevated sidelobes and structured artifacts in multi-target scenes.
{\color{black}
Similarly, despite operating over the full bandwidth, OTFS (Fig.~\ref{subfig:radar_OTFS}) suffers from severe ghosting. Multiplexing streams in the delay--Doppler domain induces high time-domain cross-correlation, manifesting as inter-stream leakage and false targets during TDBP processing.
}
The proposed Dual-Orthogonality waveform preserves full-band coherent imaging across all transmit channels without spectral fragmentation {\color{black} or delay-domain cross-ambiguity}. As a result, it achieves sharper focusing and improved separation between true scatterers and multipath-induced ghosts, particularly in the presence of closely spaced targets.

Figure~\ref{fig:SAR} further illustrates the focused MIMO-SAR reflectivity map obtained using Dual-Orthogonality waveforms. Coherent integration over 1024 aperture positions significantly enhances azimuth resolution and energy concentration. Multipath-induced ghost responses, such as second-bounce reflections, are suppressed by approximately 15~dB relative to the corresponding true scatterers. In realistic scenarios, higher-order multipath components typically exhibit lower energy due to additional propagation losses, further reinforcing the robustness of the proposed approach.

To quantitatively assess imaging quality and multipath artifact suppression, we adopt the Integrated Sidelobe Level Ratio (ISLR), defined as
\begin{equation}
\text{ISLR} = 10 \log_{10} \left( 
\frac{\int_{\mathcal{S}} |I(\mathbf{r})|^2 d\mathbf{r}}
{\int_{\mathcal{M}} |I(\mathbf{r})|^2 d\mathbf{r}}
\right),
\end{equation}
where $\mathcal{M}$ denotes the mainlobe region centered at the true target and $\mathcal{S}$ the remaining sidelobe region, such that a lower ISLR indicates superior suppression of multipath-induced ghost targets and clutter. Figure~\ref{fig:ISLR} reports the average ISLR as a function of the number of transmit antennas $N_T$. The proposed Dual-Orthogonality design consistently achieves the lowest ISLR and exhibits a monotonic improvement as $N_T$ increases. This behavior confirms that the additional spatial degrees of freedom are coherently exploited thanks to full-band orthogonality preservation. In contrast, the OFDM baselines exhibit performance saturation or degradation due to bandwidth fragmentation (FDMA) or residual non-orthogonality under Doppler (NeqSI).
{\color{black}
While OTFS outperforms both OFDM schemes via full-band transmission and improved delay--Doppler resolution, it hits an ISLR saturation floor due to intrinsic 2D grid sidelobes accumulating as inter-stream leakage. By mathematically forcing a zero-correlation zone, the proposed waveform circumvents this limit, achieving the lowest ISLR across all evaluated antenna configurations.
}

\subsection{Joint ISAC validation}

While ISLR characterizes sensing performance, communication efficiency must also be considered to evaluate overall ISAC capability. We therefore adopt the effective goodput metric
\begin{equation}
\mathcal{G} = \frac{(1 - B_e) K_e}{T_p} \quad [\text{Mbit/s}],
\end{equation}
where $B_e$ denotes the BER and $K_e$ is the number of information bits per waveform. This metric jointly captures reliability and spectral efficiency.
Figure~\ref{JointPlot} maps ISLR versus effective goodput as the number of transmit antennas varies, with the receive array fixed and SNR = 8~dB and 10~dB. Each point corresponds to a specific value of $N_T$ with $N_T=2, \ldots, 14$, and the associated goodput and ISLR are computed under identical channel conditions.
Three operating regions are defined with respect to theoretical performance limits: high-goodput values identify the communication-oriented region, low-ISLR values identify the sensing-oriented region, and the balanced ISAC region lies below the ISLR threshold and to the right of the goodput threshold.
The proposed Dual-Orthogonality waveform consistently achieves lower ISLR than {\color{black} both OFDM-NeqSI and OTFS} across all antenna configurations, indicating superior multipath suppression and improved imaging contrast. This sensing advantage is obtained with only a limited reduction in goodput, resulting in operating points that remain close to those of OFDM-NeqSI in terms of communication efficiency. OFDM-FDMA is omitted from the joint comparison because bandwidth partitioning across transmit antennas inherently reduces per-antenna resolution, leading to a fundamentally different sensing–communication operating regime that is not directly comparable with full-band schemes.

\begin{figure}[!t]
    \centering
    \subfloat[Profile A]{
        \includegraphics[width=0.9\linewidth]{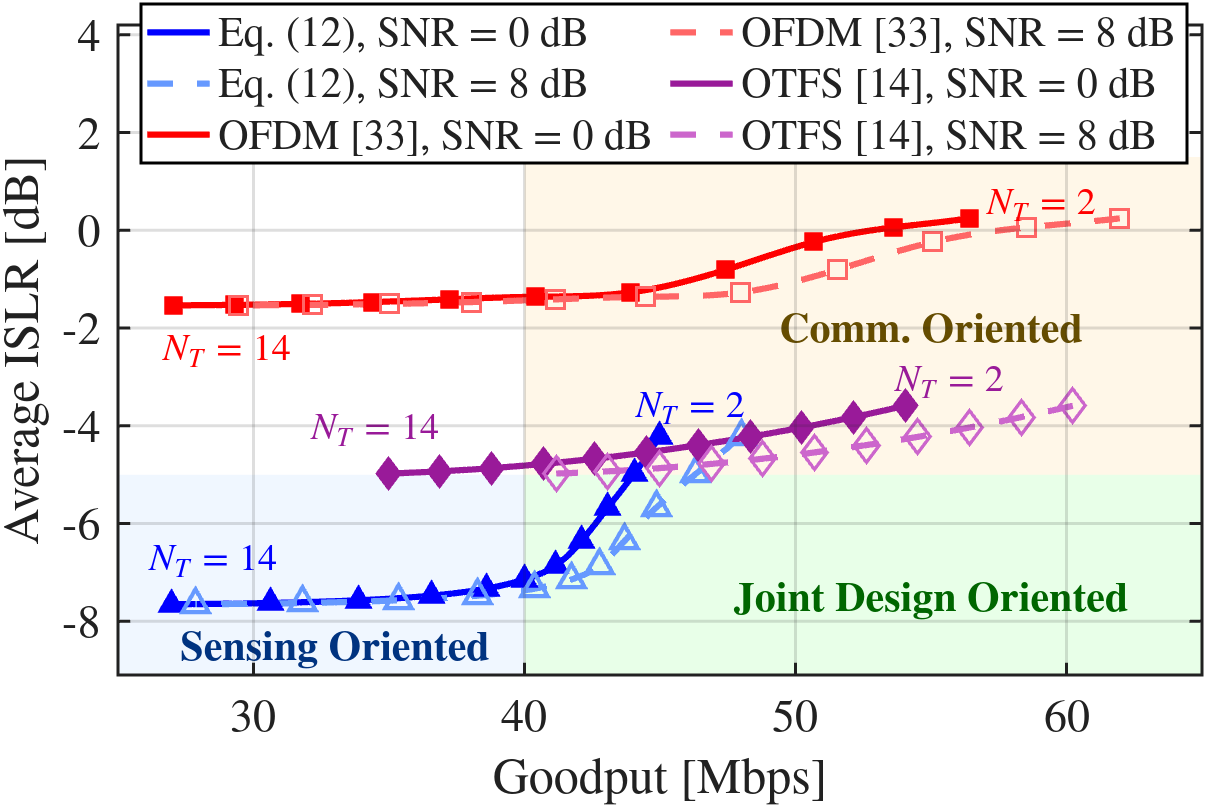}
        \label{subfig:joint_profileA}
    }
    
    \subfloat[Profile B]{
        \includegraphics[width=0.9\linewidth]{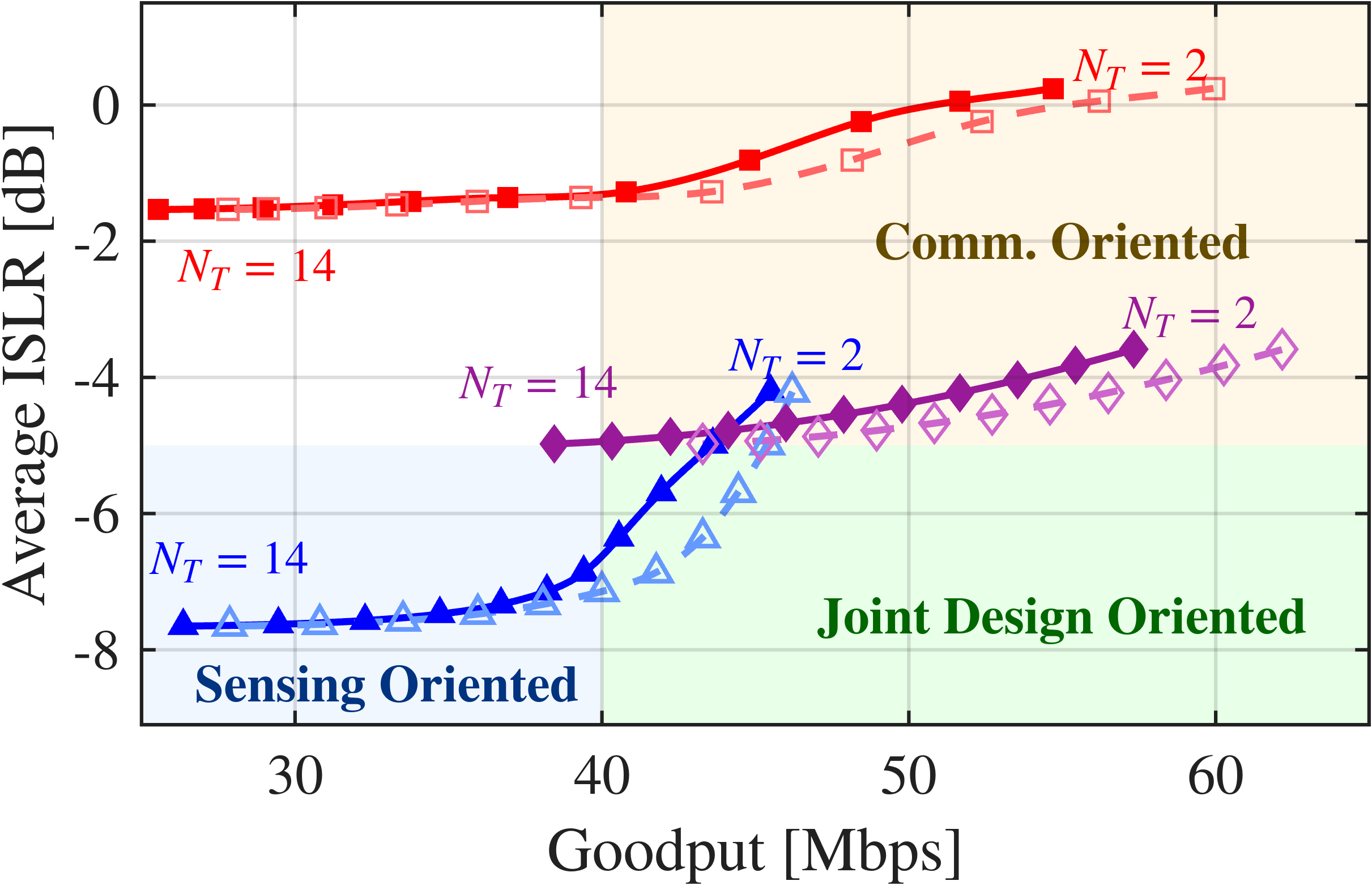}
        \label{xsubfig:joint_profileB}
    }
    {\color{black}
\caption{Joint sensing--communication performance, where the proposed waveform's communication goodput is evaluated on a single receive antenna, expressed in terms of ISLR and effective goodput for $N_T=2,\ldots,14$, with a fixed receive array and SNR values of 8~dB and 10~dB: (a) sparse multipath Profile~A and (b) rich multipath Profile~B. Each point corresponds to a different number of transmit antennas, comparing the proposed Dual-Orthogonality waveform with OFDM-NeqSI and OTFS under identical channel conditions.}
\label{JointPlot}
}
    
\end{figure}
Overall, the proposed scheme exhibits a more balanced ISAC behavior, reaching the joint design-oriented operating region as shown in the graphs, effectively balancing radar imaging and communication efficiency.
As the number of transmit antennas increases, the overall goodput decreases. For the Dual-Orthogonality waveform, this reduction is mainly due to the additional overhead introduced by the orthogonality constraints. In the case of NeqSI, the degradation is driven by enhanced sensitivity to Doppler and multipath effects, which become more pronounced with increasingly complex subcarrier allocation patterns required at higher antenna counts.
{\color{black}
Similarly, OTFS goodput degrades rapidly as $N_T$ grows due to severe inter-stream interference. Untangling these coupled streams overwhelms the receiver's linear equalizer, deteriorating the ISAC operating point. The proposed design inherently avoids this equalization bottleneck, maintaining a superior joint trade-off.
}

\section{Experimental Validation}

\begin{figure*}[!t]   
    
    \centering

\subfloat[Hardware setup\label{fig:hw_setup}]{%
        \includegraphics[width=0.2\textwidth]{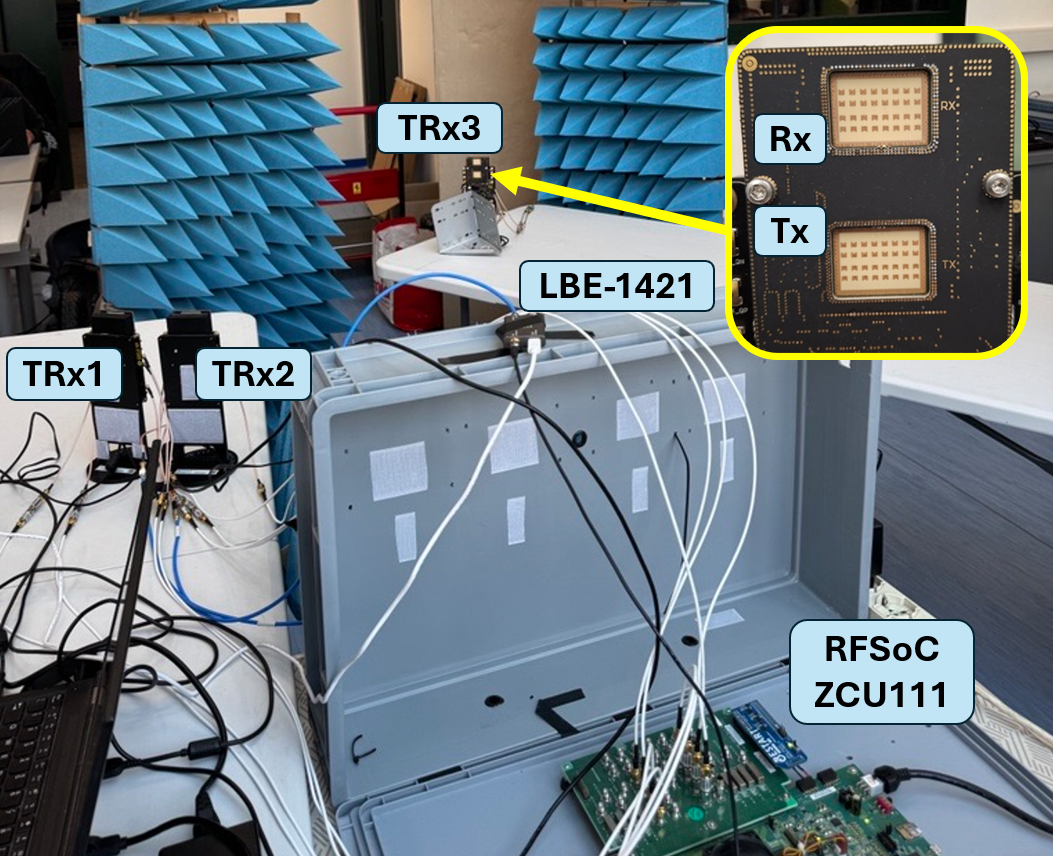}%
    }\hfill
    \subfloat[Tx1--Tx3 correlation\label{fig:tx1_tx3_corr}]{%
        \includegraphics[width=0.28\textwidth]{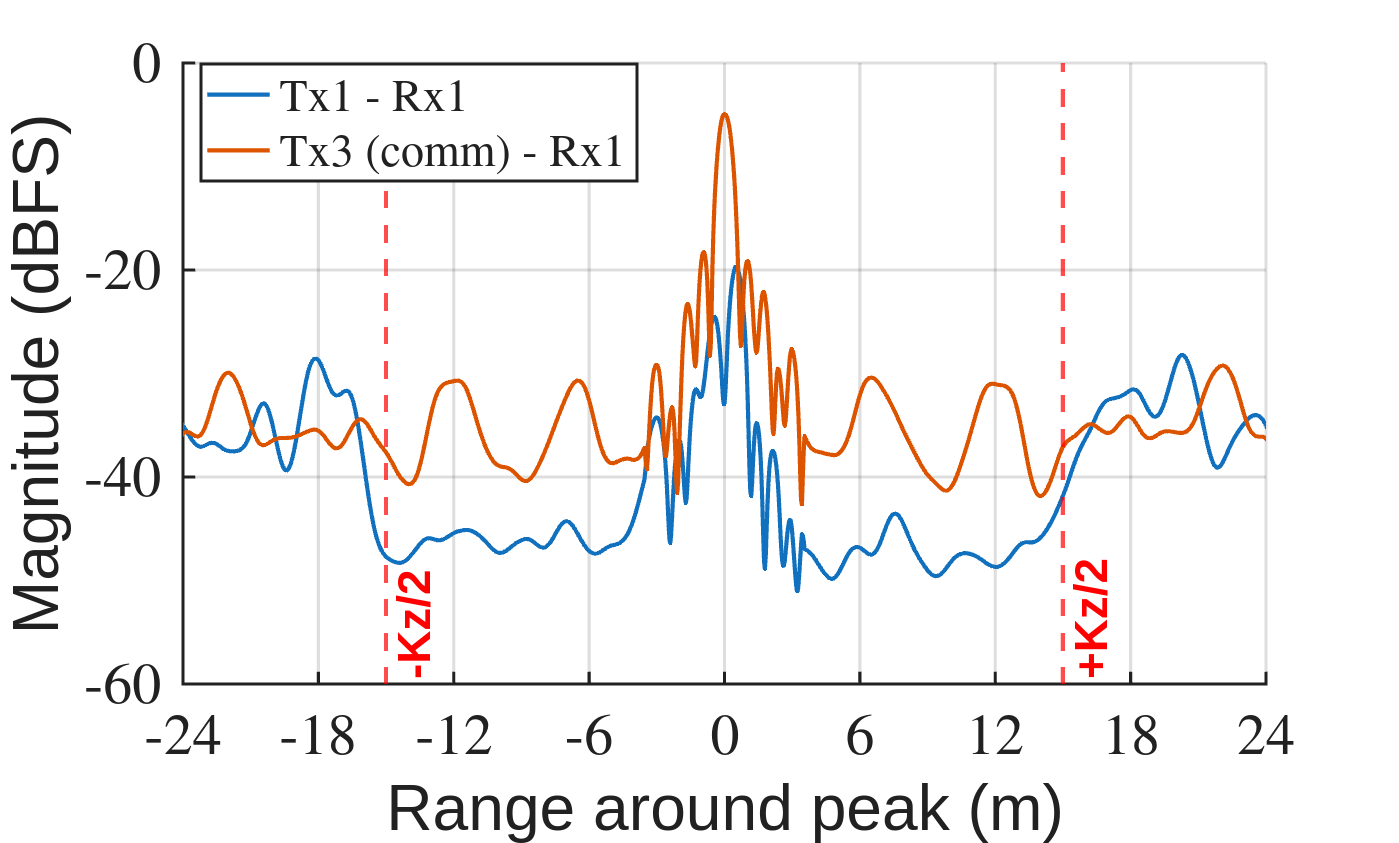}%
    }\hfill
    \subfloat[QPSK constellation\label{fig:qpsk_const}]{%
        \includegraphics[width=0.18\textwidth]{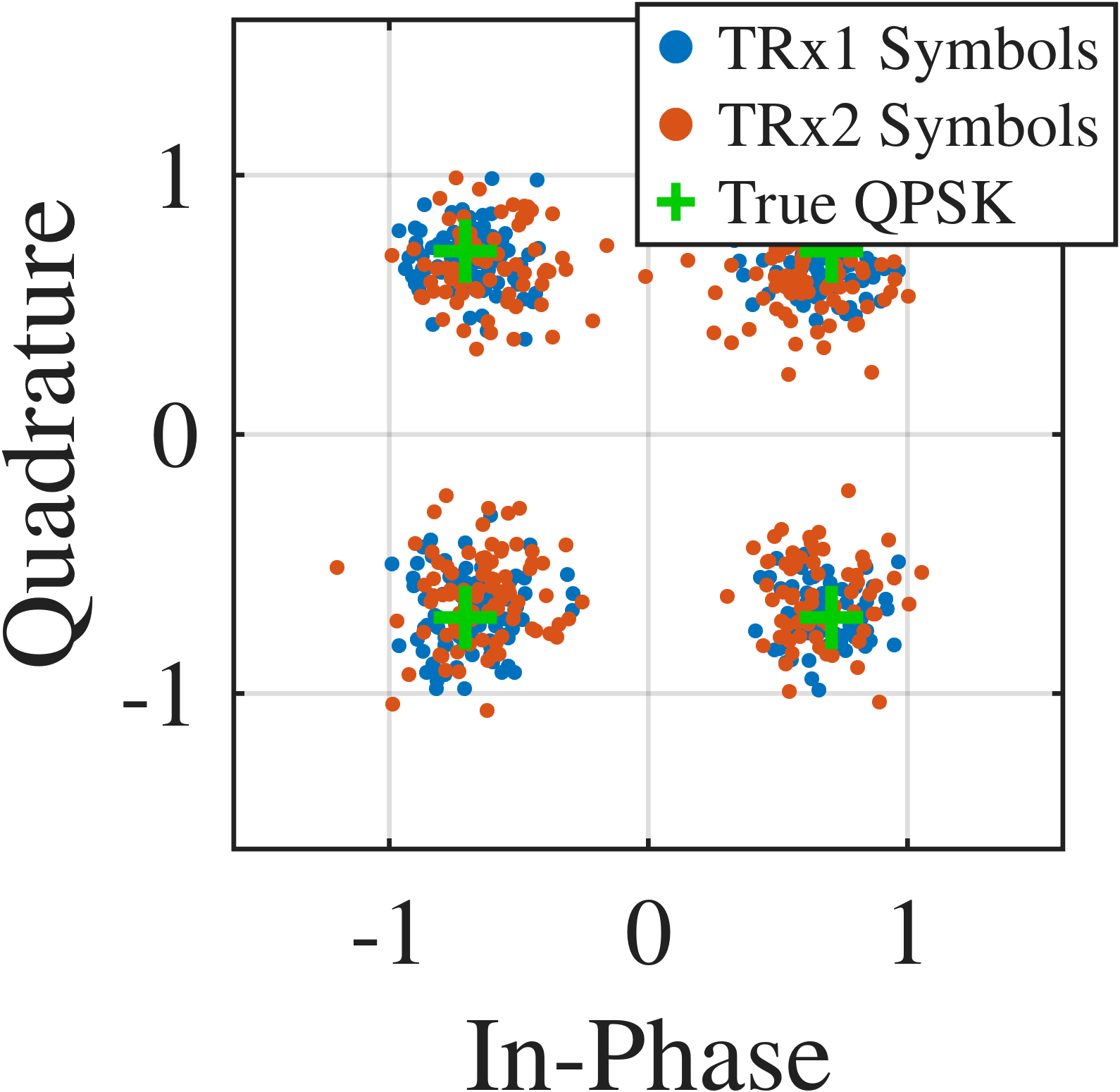}%
    }\hfill
    \subfloat[Sensing result\label{fig:sensing_result}]{%
        \includegraphics[width=0.29\textwidth]{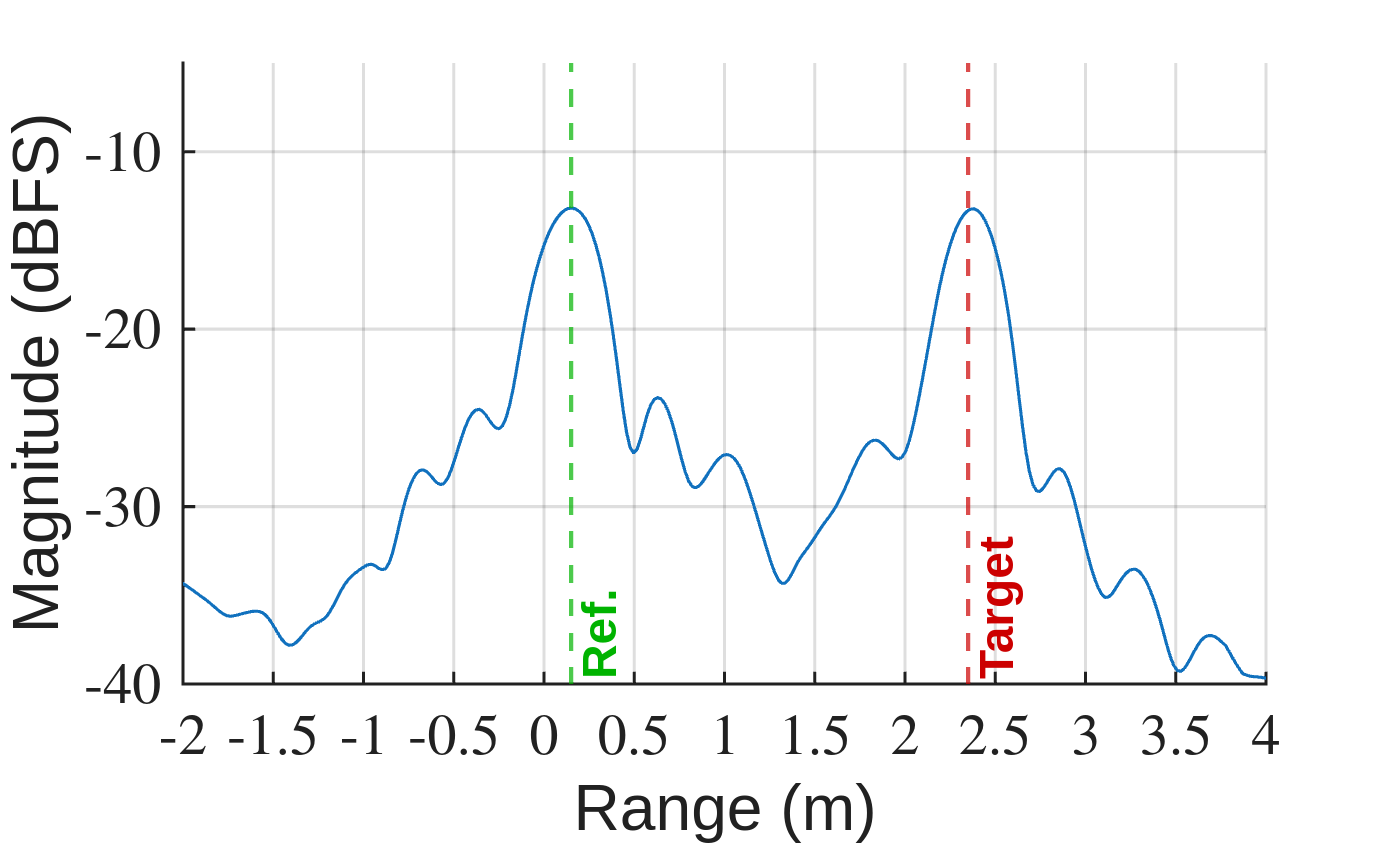}%
    }
    {\caption{\color{black}Experimental validation at 60~GHz:
(a) hardware setup comprising the RFSoC platform, three EVK06003 transceivers, synchronization hardware, anechoic panels, and corner reflectors; (b) Tx1--Tx3 cross-correlation showing the designed zero-correlation region; (c) recovered QPSK constellations for the two simultaneously transmitted streams; and (d) radar ranging result showing the reference and 2.35~m targets.}
\label{fig:experimental_results}
}
\end{figure*}

\textcolor{black}{This section experimentally validates the proposed ISAC waveforms across two configurations: Setup A (controlled Line-of-Sight) and Setup B (reflective multipath).
The experimental campaign focuses on the practical validation of waveform
properties, radar ranging, and multipath-aware communication processing,
while coherent MIMO-SAR imaging is assessed numerically in Section~VI.
Specifically, the measurements verify the designed zero cross-correlation
region, accurate radar ranging, and the effectiveness of multipath subspace
equalization under interference.
}

\subsection{Experimental Setup}

\textcolor{black}{The experimental setups are illustrated in Fig.~\ref{fig:experimental_results}(a) and Fig.~\ref{fig:setupB_validation}(a).} Measurements were conducted in the $60$~GHz Industrial, Scientific, and Medical (ISM) band using up/down-conversion modules with integrated antennas, specifically the EVK06003 evaluation kits~\cite{Sivers}. All modules were operated as transceivers (TRx) in full-duplex mode and driven by synchronized clocks in order to eliminate carrier frequency offsets. Clock synchronization was provided by an LBE-1421 GPS-disciplined oscillator.
While numerical results are obtained at $28$~GHz to align with standardized
cellular channel models, the experimental validation is conducted at $60$~GHz,
within the ISM band, due to regulatory availability and hardware constraints.
Since the proposed waveform design and receiver processing are formulated at baseband and depend on normalized delay--Doppler characteristics rather than on the carrier frequency itself, the experimental results are fully representative of the targeted operating regime and directly support the conclusions drawn from simulation.

Baseband waveform generation and acquisition were performed using an RFSoC ZCU111 platform~\cite{ZynqEK}, operated with the Multi-Tile Synchronization (MTS) reference design. The MTS framework enables precise time alignment between Analog-to-Digital Converters (ADCs) and Digital-to-Analog Converters (DACs), ensuring synchronization among transmitted waveforms. However, it constrains the system to real-valued waveforms and a maximum length of 16384 samples. The sampling frequency was set to $3932.16$~MHz to minimize the impact of up/down-sampling and ensure optimal system performance.

\textcolor{black}{The hardware configuration was adapted for each validation phase. In Setup A, three EVK06003 modules were employed: two co-located TRxs were used for sensing operations, while a third TRx was positioned facing them to evaluate baseline communication performance, with anechoic panels placed around the environment to suppress reflections. In Setup B, two transceivers (TRx1 and TRx2) were utilized to simultaneously transmit their respective data streams; the anechoic panels were removed to expose the reflective environment and intentionally induce severe multipath propagation.}
%

\begin{figure*}[!t]
\centering
\subfloat[Reflective experimental setup]{
    \includegraphics[width=0.39\textwidth]
    {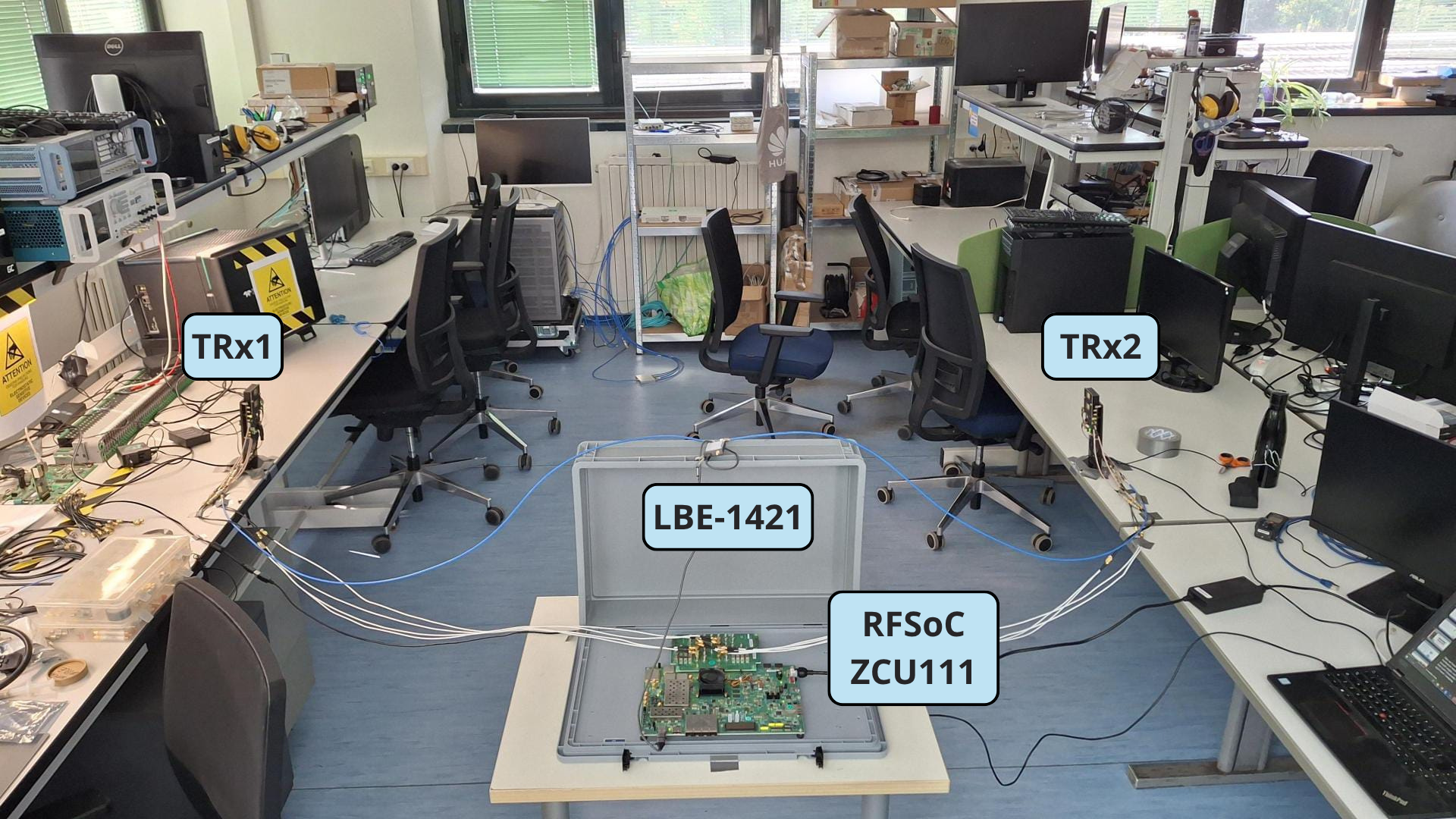}
}
\hfill
\subfloat[TRx1: Raw vs. equalized]{
    \includegraphics[width=0.24\textwidth]
    {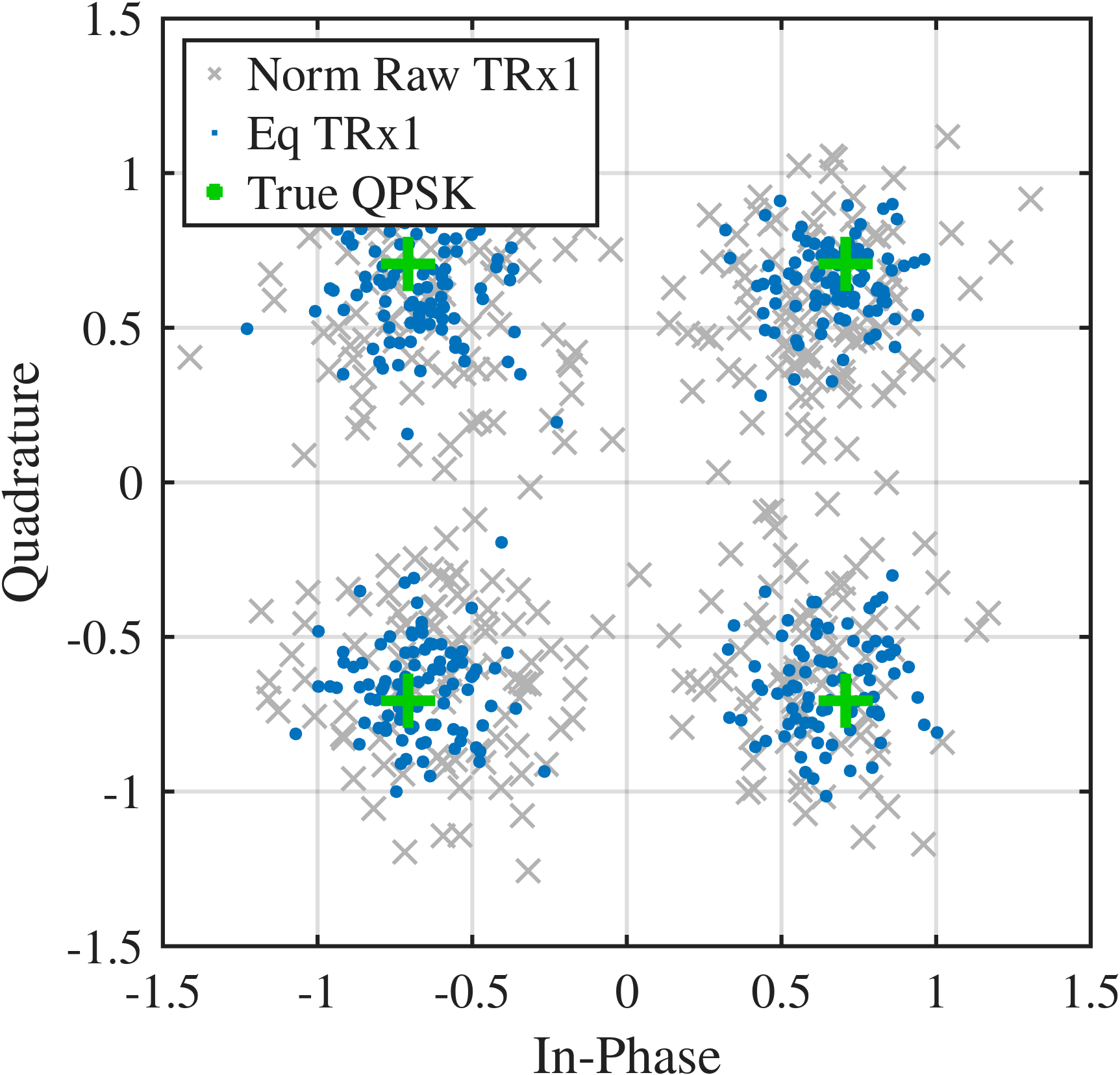}
}
\hfill
\subfloat[Decoded stream superposition]{
    \includegraphics[width=0.24\textwidth]
    {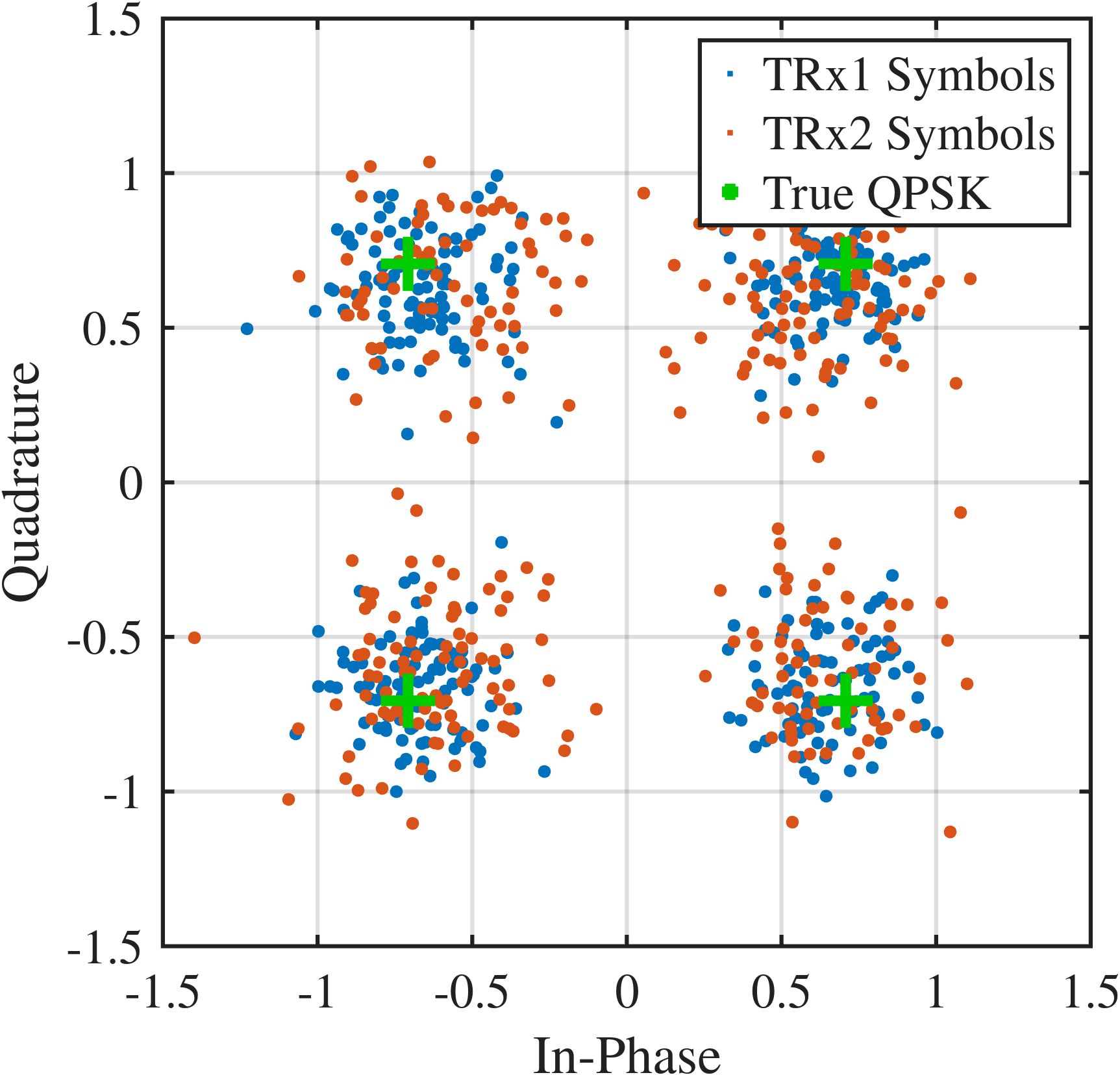}
}
\caption{\color{black}
Experimental validation in the reflective multipath environment
(Setup B) using a single receive antenna ($M_C=1$):
(a) laboratory configuration without anechoic absorbers;
(b) raw and equalized QPSK symbols for TRx1, showing the mitigation of
multipath-induced ISI and inter-stream interference; and
(c) superimposed decoded constellations, confirming the simultaneous
recovery of both transmitted streams from the same single-antenna
observation without receive-side spatial separation.
}
\label{fig:setupB_validation}
\end{figure*}

\textcolor{black}{\subsection{Baseline Configuration and Results}}

\textcolor{black}{In Setup A in Fig. \ref{fig:experimental_results}(a), the propagation environment allows isolating the intrinsic properties of the proposed waveform, such as inter-stream orthogonality, zero-autocorrelation regions, and stream separability, without confounding effects due to uncontrolled multipath (which are addressed in Setup B)}.

For the sensing experiments \textcolor{black}{in this configuration}, a corner reflector was positioned at a distance of $2.35$~m from the sensing TRxs to validate range estimation. Since acquisition was not synchronized with transmission start, a second corner reflector was placed close to the sensing TRxs to provide a zero-range reference at approximately $15$~cm.
For these communication experiments, a bandwidth of $245.76$~MHz was used, while sensing experiments employed a bandwidth of $491.52$~MHz. These values were selected as integer divisors of the sampling frequency. The resulting baseband waveforms consisted of $738$ and $1476$ samples, respectively. In both cases, the waveform duration was set to $3~\mu$s, corresponding to $11808$ samples at the DAC/ADC rate.

\textcolor{black}{The corresponding baseline results are} summarized in Fig.~\ref{fig:experimental_results}(b)--(d), covering both communication and sensing performance. From a communication perspective, Fig.~\ref{fig:experimental_results}(b) shows the measured cross-correlation between the transmitted waveforms of TRx1 and TRx3. The received signal at Rx1 consists of the superposition of the transmitted signals from all other TRxs filtered by their respective channels, as described in \eqref{comm_rec_equation}. By correlating the received signal with the known transmitted waveforms, a clear zero-autocorrelation region is observed. In this experiment, the zero-correlation interval was set to $\pm15$~m. Although hardware impairments and noise limit the depth of the correlation null, an average suppression of approximately $15$~dB is clearly visible, confirming the effectiveness of the proposed waveform design under practical conditions.
Communication performance was further evaluated at Rx3, which receives the superposition of the signals transmitted by TRx1 and TRx2, consistent with the signal model in \eqref{final_model} for $N_T=2$. The recovered Quadrature Phase Shift Keying (QPSK) constellations are shown in Fig.~\ref{fig:experimental_results}(c). The detected symbols are correctly clustered around the ideal constellation points, demonstrating effective stream separation. A slightly increased variance is observed for the second stream, which is attributed to the Dual-Orthogonality process and the associated noise enhancement during orthogonalization. Nevertheless, this effect does not compromise detection reliability. The measured Error Vector Magnitude (EVM) is $-12.89$~dB overall, with values of $-15.82$~dB and $-12.77$~dB for TRx1 and TRx2, respectively, confirming the high precision of the communication system.

Sensing performance is illustrated in Fig.~\ref{fig:experimental_results}(d). TRx1 and TRx2 were used to implement a monostatic radar configuration. The additional close-range corner reflector provides a reference peak at $15$~cm, compensating for the lack of timing synchronization between DAC generation and acquisition. The correlation outputs were interpolated by a factor of 16 to obtain smooth profiles and enable peak estimation with sub-resolution accuracy. For the employed sensing bandwidth of $491.52$~MHz, the theoretical range resolution is $30.5$~cm. The figure reports the average of four cross-correlation measurements obtained using both transmitted waveforms and receivers, clearly showing accurate detection of the target at $2.35$~m with centimeter-level precision.


\color{black}
\subsection{Validation in Reflective Multipath Environment}

\textcolor{black}{To complement the measurements of Setup A and directly evaluate the multipath-aware communication receiver, a second experimental campaign was conducted in a highly reflective indoor laboratory environment. The anechoic RF panels were removed, exposing the transmitted waveforms to strong unsuppressed reflections from metal shelving, monitors, glass surfaces, and hard desks, as shown in Fig.~\ref{fig:setupB_validation}(a). To quantify the severity of this reflective environment, we evaluated the Rician K-factor (defined as the ratio of the dominant path power to the scattered multipath power) from the extracted Channel Impulse Response (CIR). The controlled baseline (Setup A) exhibits a dominant Line-of-Sight (LOS) channel with a K-factor of $7.93$~dB. Conversely, Setup B introduces severe delay dispersion, driving the K-factor down to $2.79$~dB. Most notably, a prominent reflection is visible at a relative delay of $+5$ samples with an amplitude exceeding $50\%$ of the main LOS peak, confirming a highly dispersive propagation channel. In this setup, TRx1 and TRx2 simultaneously transmit their respective streams ($s_1$ and $s_2$). The receiver under test (Rx1, co-located at TRx1) captures the unshielded superposition of the incoming remote stream ($s_2$) and the multipath reflections of its own transmitted waveform ($s_1$).} Due to the distributed architecture of the transceiver modules, the
receiver Rx1 operates with a single-antenna observation ($M_C=1$).

Mathematically, for $M_C=1$, the receive-array term
$[\mathbf{a}_{\mathrm{rx}}(\phi_p)]_m$ reduces to a scalar and can be
absorbed into the corresponding effective path gain. As a consequence,
the spatial MUSIC and beamforming stages of the general multi-antenna
receiver are not available, and channel estimation is performed directly
from the temporal observation.
The estimated temporal channel parameters are then used to construct the
effective subspace channels required by the multipath-aware equalizer. Inter-stream interference is mitigated through projection onto the orthogonal complement of the estimated interference subspace,
$\boldsymbol{\Pi}^{\perp}_{\mathbf{H}^{\mathrm{int}}_{n,m}}$.
In this configuration, stream separation relies on the temporal and
code-domain structure of the Dual-Orthogonality waveforms rather than on receive-side spatial filtering.
The decoding results are reported in
Fig.~\ref{fig:setupB_validation}(b)--(c). In the reflective environment,
the raw symbols are strongly dispersed by ISI
superimposed on the IAI generated by the
simultaneously transmitted streams. The corresponding raw EVMs are $-8.31$~dB and $-7.85$~dB for TRx1 and TRx2,
respectively.

After applying the proposed multipath-aware equalizer, the EVM improves
to $-12.90$~dB for TRx1 and $-9.43$~dB for TRx2. As shown in
Fig.~\ref{fig:setupB_validation}(b), the equalized symbols concentrate
substantially closer to the ideal QPSK constellation points. Moreover,
Fig.~\ref{fig:setupB_validation}(c) shows that both transmitted streams
are simultaneously recovered from the same multipath-contaminated
single-antenna observation.

\color{black}

\section{Conclusion}

Dual-Orthogonality waveforms provide an effective solution for ISAC in dynamic multipath environments by enforcing orthogonality over a physically admissible delay window while preserving full-band operation per transmit channel. This avoids the bandwidth fragmentation of FDMA--OFDM and provides radar and imaging performance competitive with OFDM-NeqSI.
{\color{black}
Comparison with MIMO-OTFS further confirms favorable full-band sensing performance, with improved sidelobe behavior and a competitive joint sensing--communication operating point.
}
From a communication perspective, multipath-aware linear equalization effectively mitigates orthogonality loss, while the pilot-aided estimator approaches the Cramér--Rao Bound under favorable propagation conditions and remains robust in richer multipath scenarios. From a sensing perspective, coherent SAR processing provides accurate target localization and effective suppression of multipath-induced artifacts.
{\color{black}
While coherent MIMO-SAR imaging is assessed numerically, the $60$~GHz experimental campaign validates the waveform properties, radar ranging, and communication receiver. Measurements in a reflective environment further demonstrate multipath-aware recovery of simultaneously transmitted streams from a single-antenna observation without receive-side spatial separation, with clear EVM improvement after equalization.
}
Since the framework is formulated in terms of a generic unitary basis, OFDM-like signaling can be viewed as a particular instantiation. Hybrid designs combining OFDM communication efficiency with structured delay--Doppler subspace orthogonality represent a promising direction for future investigation.
\appendices

\section{Pilot Consistency under Dual-Orthogonality}
\label{app:pilot_consistency}

This appendix justifies the pilot insertion strategy adopted for multipath parameter estimation in Sec.~IV.

Let $\mathbf{C}_n\in\mathbb{C}^{K\times K_s}$ denote the code matrix associated
with the $n$-th transmit stream, and define 
$\mathbf{C}_{\mathrm{u}} \triangleq \mathbf{C}_n(1{:}K_u,:) \in \mathbb{C}^{K_u\times K_s},
 K_u < K_s$,
as the submatrix collecting its first $K_u$ rows. Assume that
$\mathrm{rank}(\mathbf{C}_{\mathrm{u}})=K_u$, and let
$\mathbf{C}_{\mathrm{u}}^{+}$ denote its Moore--Penrose pseudoinverse. Let
$\mathcal{N}_n\in\mathbb{C}^{K_s\times (K_s-K_u)}$ span the null space of
$\mathbf{C}_{\mathrm{u}}$, i.e., 
$\mathbf{C}_{\mathrm{u}} \mathcal{N}_n = \mathbf{0}$.
The symbol vector associated with the $n$-th transmit stream is constructed as
$\boldsymbol{\alpha}_n
=
\mathbf{C}_{\mathrm{u}}^{+}\mathbf{u}
+
\mathcal{N}_n \mathbf{x}_n$, 
where $\mathbf{u}\in\mathbb{C}^{K_u}$ denotes the known pilot sequence and
$\mathbf{x}_n\in\mathbb{C}^{K_s-K_u}$ collects the free information symbols.
The corresponding transmitted waveform
$\mathbf{s}_n=\mathbf{C}_n\boldsymbol{\alpha}_n$ satisfies
$[\mathbf{s}_n]_{1:K_u}
=
\mathbf{C}_{\mathrm{u}} \boldsymbol{\alpha}_n
=
\mathbf{u}$,
since $\mathbf{C}_{\mathrm{u}} \mathbf{C}_{\mathrm{u}}^{+} = \mathbf{I}_{K_u}$ and
$\mathbf{C}_{\mathrm{u}} \mathcal{N}_n = \mathbf{0}$.

{\color{black}
While pilots can theoretically be embedded in any stream $n$ provided the intersection $\mathrm{Null}(\mathbf{C}_{\mathrm{u}}) \cap \mathrm{Null}(\mathbf{Q}_n)$ is non-empty, our implementation assigns the pilot exclusively to the first stream ($n=1$, where $\mathbf{Q}_1=\mathbf{0}$) to minimize overhead and avoid subspace conflicts. Consequently, the available payload dimension for the first stream reduces to $K_s - K_u$, whereas all subsequent streams ($n>1$) carry no pilots and retain their full data dimension $d_n = K_s - (n-1)(K_z - 1)$. In our numerical evaluations, a pilot of length $K_u = 30$ is used. For $B=40$~MHz ($K_s=200$), this introduces a $15\%$ overhead on the first stream and exactly $0\%$ on all remaining streams.
}
\color{black}
\section{Full-Rank Condition for Dominant-Path Interference Nulling}
\label{app:rank_proof}

This appendix derives the rank condition underlying the dominant-path interference-nulling construction in Sec.~IV, formalizing when linear subspace projection can suppress inter-stream interference from a dominant multipath component.


Consider the path-dependent delay--Doppler operator $\mathbf{H}_{\hat p}$
associated with the dominant multipath component indexed by $\hat p$, and define
the stacked interference matrix corresponding to the $n$-th transmit stream as
$ \mathbf{H}^{\mathrm{int}}_{\hat{p},n}=
\big[
\mathbf{H}_{\hat{p}}\mathbf{C}_{1},
\dots,
\mathbf{H}_{\hat{p}}\mathbf{C}_{n-1},
\mathbf{H}_{\hat{p}}\mathbf{C}_{n+1},
\dots,
\mathbf{H}_{\hat{p}}\mathbf{C}_{N_T}
\big]
\in
\mathbb{C}^{K \times (N_T-1)K_s}
$,
where $\mathbf{C}_\ell\in\mathbb{C}^{K\times K_s}$ denotes the orthogonal code
matrix assigned to the $\ell$-th transmit stream, and the block corresponding to
$\ell = n$ is omitted.

The delay--Doppler operator $\mathbf{H}_{\hat p}$ admits the decomposition
$\mathbf{H}_{\hat p}
=
\mathbf{D}(\nu_{\hat p})\,
\mathbf{F}^{\mathsf{H}}\mathbf{B}(\tau_{\hat p})\,
\mathbf{F}
\;\triangleq\; \mathbf{U}$, 
where $\mathbf{F}$ is the unitary DFT matrix and $\mathbf{B}(\cdot)$ and
$\mathbf{D}(\cdot)$ are diagonal phase-rotation matrices modeling delay and
Doppler, respectively. Since $\mathbf{U}$ is the product of unitary matrices, it
is itself unitary and therefore rank preserving.
Consequently, the rank of the interference matrix satisfies
$\mathrm{rank}\!\left(\mathbf{H}^{\mathrm{int}}_{\hat{p},n}\right)
=
\mathrm{rank}\!(
\big[\,
\mathbf{C}_\ell
\,\big]_{\ell \neq n}
)$.
From the Dual-Orthogonality construction in Sec.~III, the matrices
$\{\mathbf{C}_\ell\}$ have orthonormal columns and mutually orthogonal column
spaces. It follows that
$\mathrm{rank}\!(
\big[\,
\mathbf{C}_\ell
\,\big]_{\ell \neq n}
)
=
\sum_{\ell \neq n} \mathrm{rank}(\mathbf{C}_\ell)
=
(N_T-1)K_s$.
The interference matrix $\mathbf{H}_{n,\hat p}^{\mathrm{int}}$ has full column rank and admits a non-trivial left null space provided that $(N_T-1)K_s < K$.
This condition is satisfied under the design choice $K_s \approx K/N_T$ adopted
throughout the paper. While the above inequality provides a sufficient (but not
necessary) condition, it guarantees the existence of a non-trivial null space.

{\color{black}
\section{Validity of the Circulant Channel Approximation}
\label{app:circulant_bound}

This appendix derives the mathematical bound on the error introduced by approximating the true physical linear (Toeplitz) multipath delay with the circulant operator used in \eqref{final_model}.

Let $\mathbf{S} \in \mathbb{C}^{K \times N_T}$ denote the transmit block for the current signaling interval, and $\mathbf{S}_{\text{prev}}$ denote the block from the previous interval, both containing independent zero-mean symbols with variance $\sigma_s^2$. For a generic path $p$ with delay $\tau_p$, let $L_p = \lceil \tau_p / T_s \rceil$ denote the discrete sample delay. 

A true physical delay applies a linear shift. Let $\mathbf{T}_{L_p} \in \mathbb{C}^{K \times K}$ denote the linear zero-padded integer-delay operator (as defined in Proposition 1), which shifts the block down by $L_p$ samples. Let $\mathbf{T}_{\text{wrap}, L_p}$ denote the complementary upper-shift matrix that extracts the last $L_p$ samples. The exact physical received signal for the current block, including Inter-Block Interference (IBI) from the previous block, is
\begin{equation}
    \mathbf{Y}_{\text{exact}} = \sum_{p=0}^{P-1} \gamma_p \mathbf{D}(\nu_p) \Big[ \mathbf{T}_{L_p} \mathbf{S} + \mathbf{T}_{\text{wrap}, L_p} \mathbf{S}_{\text{prev}} \Big] \mathbf{A}_p + \mathbf{W},
\end{equation}
where $\mathbf{A}_p = \mathbf{a}_{\text{tx}}(\theta_p)\mathbf{a}_{\text{rx}}^{\mathsf{T}}(\phi_p)$. 

Conversely, the circulant approximation $\mathbf{F}^{\mathsf{H}}\mathbf{B}(\tau_p)\mathbf{F}$ mathematically applies a circular shift, which wraps the tail of the current block to the front. The circulant operator decomposes perfectly as $\mathbf{T}_{L_p} + \mathbf{T}_{\text{wrap}, L_p}$. The approximated model in \eqref{final_model} is thus equivalent to
\begin{equation}
    \mathbf{Y}_{\text{circ}} = \sum_{p=0}^{P-1} \gamma_p \mathbf{D}(\nu_p) \Big[ \mathbf{T}_{L_p} \mathbf{S} + \mathbf{T}_{\text{wrap}, L_p} \mathbf{S} \Big] \mathbf{A}_p + \mathbf{W}.
\end{equation}

Defining the approximation error matrix as $\mathbf{E} = \mathbf{Y}_{\text{circ}} - \mathbf{Y}_{\text{exact}}$, the main linear delay terms and noise cancel exactly, yielding $\mathbf{E} = \sum_{p=0}^{P-1} \mathbf{E}_p$, where the error induced by path $p$ is explicitly defined as:
\begin{equation}
    \mathbf{E}_p = \gamma_p \mathbf{D}(\nu_p) \mathbf{T}_{\text{wrap}, L_p} \big( \mathbf{S} - \mathbf{S}_{\text{prev}} \big) \mathbf{A}_p.
\end{equation}
This confirms the error is confined solely to the $L_p$ boundary samples, representing the mismatch between the assumed circular wrap-around and the true IBI.

To bound the total expected error energy, we apply the standard Uncorrelated Scattering assumption, where independent paths possess uncorrelated phases ($\mathbb{E}[\gamma_p \gamma_{p'}^*] = 0$ for $p \neq p'$). Consequently, cross-path terms vanish in expectation, and the total error energy is the sum of the individual path energies. For a given path $p$, since $\mathbf{D}(\nu_p)$ is unitary, $\|\mathbf{A}_p\|_F^2 = N_T M_C$, and $\mathbf{T}_{\text{wrap}, L_p}$ extracts exactly $L_p$ rows of the independent data difference (variance $2\sigma_s^2$), we have
\begin{equation}
    \mathbb{E}[\|\mathbf{E}_p\|_F^2] = |\gamma_p|^2 N_T M_C (2 L_p \sigma_s^2).
\end{equation}
The expected signal energy received from path $p$ without error is $\mathbb{E}[\|\mathbf{Y}_{\text{exact}, p}\|_F^2] = |\gamma_p|^2 N_T M_C K \sigma_s^2$. 

The relative error energy is given by the ratio:
\begin{equation}
    \frac{\mathbb{E}[\|\mathbf{E}_p\|_F^2]}{\mathbb{E}[\|\mathbf{Y}_{\text{exact}, p}\|_F^2]} = \frac{2 L_p}{K} \le \frac{2(\tau_{\max} + T_s)}{T_p},
\end{equation}
where $\tau_{\max}$ is the maximum channel delay spread. Note that this exact decomposition assumes integer sample delays; for fractional delays, the operator $\mathbf{F}^{\mathsf{H}}\mathbf{B}(\tau_p)\mathbf{F}$ does not reduce to a perfect integer circular shift, which constitutes an inherent limitation of the compact circulant model. Nevertheless, for the system parameters considered in this work ($T_p = 40\,\mu\text{s}$, $\tau_{\max} \approx 1\,\mu\text{s}$), the maximum relative error energy introduced by the circulant approximation is bounded by $\approx 5\%$. Therefore, because $T_p \gg \tau_{\max}$, the circulant model accurately captures the physical channel macroscopic behavior while enabling compact frequency-domain equalizer design.

}

\color{black}

\bibliographystyle{IEEEtran}
\bibliography{bibliography}

\end{document}